\pgfplotsset{compat=1.6}
\newcommand{\removelatexerror}{\let\@latex@error\@gobble}
\newcommand{\gettikzxy}[3]{%
  \tikz@scan@one@point\pgfutil@firstofone#1\relax
  \edef#2{\the\pgf@x}%
  \edef#3{\the\pgf@y}%
}
\newcommand{\ie}{i.\,e.\@\xspace}
\newcommand{\wrt}{w.\,r.\,t.\@\xspace}
\newcommand{\cf}{cf.\@\xspace}
\newcommand{\mywlog}{w.\,l.\,o.\,g.\@\xspace}
\newcommand{\LSAPEGED}{\texttt{LSAPE-GED}\xspace}
\newcommand{\NP}{\ensuremath{\mathcal{NP}}\xspace}
\newcommand{\APX}{\ensuremath{\mathcal{APX}}\xspace}
\newcommand{\GI}{\ensuremath{\mathcal{GI}}\xspace}
\newcommand{\BP}{\texttt{BP}\xspace}
\newcommand{\STAR}{\texttt{STAR}\xspace}
\newcommand{\BRANCHUNI}{\texttt{BRANCH-UNI}\xspace}
\newcommand{\BRANCHFAST}{\texttt{BRANCH-FAST}\xspace}
\newcommand{\BRANCH}{\texttt{BRANCH}\xspace}
\newcommand{\WALKS}{\texttt{WALKS}\xspace}
\newcommand{\RING}{\texttt{RING}\xspace}
\newcommand{\RINGML}{\texttt{RING-ML}\xspace}
\newcommand{\RINGOPT}{\texttt{RING}\textsuperscript{\texttt{OPT}}\xspace}
\newcommand{\RINGGD}{\texttt{RING}\textsuperscript{\texttt{GD}}\xspace}
\newcommand{\RINGMS}{\texttt{RING}\textsuperscript{\texttt{MS}}\xspace}
\newcommand{\SUBGRAPH}{\texttt{SUBGRAPH}\xspace}
\newcommand{\IPFP}{\texttt{IPFP}\xspace}
\newcommand{\NGM}{\texttt{NGM}\xspace}
\newcommand{\PREDICT}{\texttt{PREDICT}\xspace}
\newcommand{\ALG}{\ensuremath{\mathtt{ALG}}\xspace}
\newcommand{\pah}{\textsc{pah}\xspace}
\newcommand{\alkane}{\textsc{alkane}\xspace}
\newcommand{\smol}[1]{\textsc{s\nobreakdash-mol\nobreakdash-\oldstylenums{#1}}}
\newcommand{\sacyclic}[1]{\textsc{s\nobreakdash-acyclic\nobreakdash-\oldstylenums{#1}}}
\newcommand{\smolnoarg}{\textsc{s\nobreakdash-mol}\xspace}
\newcommand{\sacyclicnoarg}{\textsc{s\nobreakdash-acyclic}\xspace}
\newcommand{\aids}{\textsc{aids}\xspace}
\newcommand{\acyclic}{\textsc{acyclic}\xspace}
\newcommand{\letter}{\textsc{letter}\xspace}
\newcommand{\GED}{\ifmmode\mathrm{GED}\else{GED}\fi\xspace}
\newcommand{\DNN}{DNN\xspace}
\newcommand{\SVC}{SVC\xspace}
\newcommand{\SVM}{1-SVM\xspace}
\newcommand{\LSAPE}{\ifmmode\mathrm{LSAPE}\else{LSAPE}\fi\xspace}
\newcommand{\LSAP}{\ifmmode\mathrm{LSAP}\else{LSAP}\fi\xspace}
\newcommand{\QAP}{\ifmmode\mathrm{QAP}\else{QAP}\fi\xspace}
\newcommand{\R}{\ensuremath{\mathbb{R}}\xspace}
\newcommand{\N}{\ensuremath{\mathbb{N}}\xspace}
\newcommand{\C}{\ensuremath{\mathbf{C}}\xspace}
\newcommand{\LAYER}{\ensuremath{\mathcal{L}}\xspace}
\newcommand{\RINGLS}[1][L]{\ensuremath{\mathcal{R}_{#1}}\xspace}
\newcommand{\NODES}{\ensuremath{\mathit{N}}\xspace}
\newcommand{\INNER}{\ensuremath{\mathit{IE}}\xspace}
\newcommand{\OUTER}{\ensuremath{\mathit{OE}}\xspace}
\newcommand{\defined}{\coloneqq}
\DeclareMathOperator*{\argmin}{arg\,min}
\DeclareMathOperator*{\supp}{supp}
\DeclareMathOperator*{\diam}{diam}
\newcommand{\norm}[1]{\left\lVert#1\right\rVert}
\tikzstyle{walks}=[mark=Mercedes star, mark size=3pt, color=blue, thick]
\tikzstyle{subgraph}=[mark=asterisk,mark size=3pt, color=teal, thick]
\tikzstyle{branch_fast}=[mark=10-pointed star,mark size=3pt, color=brown]
\tikzstyle{branch}=[mark=|,mark size=3pt, color=violet, thick]
\tikzstyle{bipartite}=[mark=-,mark size=3pt, color=black, thick]
\tikzstyle{ring_ms}=[mark=square, mark size=3pt, color=red, thick]
\tikzstyle{ring_gd}=[mark=o, mark size=3pt, color=red, thick]
\tikzstyle{ring_opt}=[mark=triangle, mark size=3pt, color=red, thick]
\tikzstyle{ring_ml_1svm}=[mark=x, mark size=3pt, color=magenta, thick]
\tikzstyle{bipartite_ml_1svm}=[mark=+,mark size=3pt, color=cyan, thick]
\tikzstyle{ring_ml_dnn}=[mark=otimes, mark size=3pt, color=magenta, thick]
\tikzstyle{bipartite_ml_dnn}=[mark=oplus,mark size=3pt, color=cyan, thick]
\tikzstyle{ipfp}=[mark=*, mark size=3pt, color=black, thick]
\newcommand{\addlogparetoplots}[6]{
\nextgroupplot[
ymode=log, 
title={#1}, 
legend to name=grouplegend, 
ylabel={#6},
xlabel={#5},
title style={yshift=-1ex},
xlabel style={yshift=1ex, font=\footnotesize},
ylabel style={yshift=-1ex, font=\footnotesize},
xticklabel style={font=\scriptsize},
yticklabel style={font=\scriptsize},
]
\addplot[ring_opt, only marks] table [y=#4_RING__LED-LSAPE_OPTIMAL, x=#3_RING__LED-LSAPE_OPTIMAL, col sep=comma, ignore chars={"}] {data/#2};
\addplot[ring_gd, only marks] table [y=#4_RING__LED-LSAPE_GREEDY, x=#3_RING__LED-LSAPE_GREEDY, col sep=comma, ignore chars={"}] {data/#2};
\addplot[ring_ms, only marks] table [y=#4_RING__LED-GAMMA, x=#3_RING__LED-GAMMA, col sep=comma, ignore chars={"}] {data/#2};
\addplot[ring_ml_1svm, only marks] table [y=#4_RING_ML__ML-ONE_CLASS_SVM_LIKELIHOOD, x=#3_RING_ML__ML-ONE_CLASS_SVM_LIKELIHOOD, col sep=comma, ignore chars={"}] {data/#2};
\addplot[ring_ml_dnn, only marks] table [y=#4_RING_ML__ML-DNN, x=#3_RING_ML__ML-DNN, col sep=comma, ignore chars={"}] {data/#2};
\addplot[bipartite_ml_1svm, only marks] table [y=#4_BIPARTITE_ML__ML-ONE_CLASS_SVM_LIKELIHOOD, x=#3_BIPARTITE_ML__ML-ONE_CLASS_SVM_LIKELIHOOD, col sep=comma, ignore chars={"}] {data/#2};
\addplot[bipartite_ml_dnn, only marks] table [y=#4_BIPARTITE_ML__ML-DNN, x=#3_BIPARTITE_ML__ML-DNN, col sep=comma, ignore chars={"}] {data/#2};
\addplot[walks, only marks] table [y=#4_WALKS, x=#3_WALKS, col sep=comma, ignore chars={"}] {data/#2};
\addplot[subgraph, only marks] table [y=#4_SUBGRAPH, x=#3_SUBGRAPH, col sep=comma, ignore chars={"}] {data/#2};
\addplot[branch, only marks] table [y=#4_BRANCH, x=#3_BRANCH, col sep=comma, ignore chars={"}] {data/#2};
\addplot[branch_fast, only marks] table [y=#4_BRANCH_FAST, x=#3_BRANCH_FAST, col sep=comma, ignore chars={"}] {data/#2};
\addplot[bipartite, only marks] table [y=#4_BIPARTITE, x=#3_BIPARTITE, col sep=comma, ignore chars={"}] {data/#2};
\addplot[ipfp, only marks] table [y=#4_IPFP, x=#3_IPFP, col sep=comma, ignore chars={"}] {data/#2};
\legend{\underline{\RINGOPT},\underline{\RINGGD},\underline{\RINGMS},\underline{\RINGML (\SVM)},\underline{\RINGML (\DNN)},\underline{\PREDICT (\SVM)},\PREDICT (\DNN),\WALKS,\SUBGRAPH,\BRANCH,\BRANCHFAST,\BP,\IPFP}
}
\newcommand{\addloglineplots}[6]{
\nextgroupplot[
ymode=log,
title={#1}, 
legend to name=grouplegend, 
ylabel={#6},
xlabel={#5},
title style={yshift=-1ex,},
xlabel style={yshift=1ex, font=\footnotesize},
ylabel style={yshift=-1ex, font=\footnotesize},
xticklabel style={font=\scriptsize},
yticklabel style={font=\scriptsize},
xtick={1,4,7,10},
]
\addlegendimage{ring_opt, only marks};
\addlegendimage{ring_gd, only marks};
\addlegendimage{ring_ms, only marks};
\addlegendimage{ring_ml_1svm, only marks};
\addlegendimage{ring_ml_dnn, only marks};
\addlegendimage{bipartite_ml_1svm, only marks};
\addlegendimage{bipartite_ml_dnn, only marks};
\addlegendimage{walks, only marks};
\addlegendimage{subgraph, only marks};
\addlegendimage{branch, only marks};
\addlegendimage{branch_fast, only marks};
\addlegendimage{bipartite, only marks};
\addplot[ring_opt] table [y=#4_RING__LED-LSAPE_OPTIMAL, x=#3, col sep=comma, ignore chars={"}] {data/#2};
\addplot[ring_gd] table [y=#4_RING__LED-LSAPE_GREEDY, x=#3, col sep=comma, ignore chars={"}] {data/#2};
\addplot[ring_ms] table [y=#4_RING__LED-GAMMA, x=#3, col sep=comma, ignore chars={"}] {data/#2};
\addplot[ring_ml_1svm] table [y=#4_RING_ML__ML-ONE_CLASS_SVM_LIKELIHOOD, x=#3, col sep=comma, ignore chars={"}] {data/#2};
\addplot[ring_ml_dnn] table [y=#4_RING_ML__ML-DNN, x=#3, col sep=comma, ignore chars={"}] {data/#2};
\addplot[bipartite_ml_1svm] table [y=#4_BIPARTITE_ML__ML-ONE_CLASS_SVM_LIKELIHOOD, x=#3, col sep=comma, ignore chars={"}] {data/#2};
\addplot[bipartite_ml_dnn] table [y=#4_BIPARTITE_ML__ML-DNN, x=#3, col sep=comma, ignore chars={"}] {data/#2};
\addplot[walks] table [y=#4_WALKS, x=#3, col sep=comma, ignore chars={"}] {data/#2};
\addplot[subgraph] table [y=#4_SUBGRAPH, x=#3, col sep=comma, ignore chars={"}] {data/#2};
\addplot[branch] table [y=#4_BRANCH, x=#3, col sep=comma, ignore chars={"}] {data/#2};
\addplot[branch_fast] table [y=#4_BRANCH_FAST, x=#3, col sep=comma, ignore chars={"}] {data/#2};
\addplot[bipartite] table [y=#4_BIPARTITE, x=#3, col sep=comma, ignore chars={"}] {data/#2};
\legend{\underline{\RINGOPT},\underline{\RINGGD},\underline{\RINGMS},\underline{\RINGML (\SVM)},\underline{\RINGML (\DNN)},\underline{\PREDICT (\SVM)},\PREDICT (\DNN),\WALKS,\SUBGRAPH,\BRANCH,\BRANCHFAST,\BP}
}
\newcommand{\addlineplots}[6]{
\nextgroupplot[
title={#1}, 
legend to name=grouplegend, 
ylabel={#6},
xlabel={#5},
title style={yshift=-1ex,},
xlabel style={yshift=1ex, font=\footnotesize},
ylabel style={yshift=-1ex, font=\footnotesize},
xticklabel style={font=\scriptsize},
yticklabel style={font=\scriptsize},
xtick={1,4,7,10},
]
\addlegendimage{ring_opt, only marks};
\addlegendimage{ring_gd, only marks};
\addlegendimage{ring_ms, only marks};
\addlegendimage{ring_ml_1svm, only marks};
\addlegendimage{ring_ml_dnn, only marks};
\addlegendimage{bipartite_ml_1svm, only marks};
\addlegendimage{bipartite_ml_dnn, only marks};
\addlegendimage{walks, only marks};
\addlegendimage{subgraph, only marks};
\addlegendimage{branch, only marks};
\addlegendimage{branch_fast, only marks};
\addlegendimage{bipartite, only marks};
\addplot[ring_opt] table [y=#4_RING__LED-LSAPE_OPTIMAL, x=#3, col sep=comma, ignore chars={"}] {data/#2};
\addplot[ring_gd] table [y=#4_RING__LED-LSAPE_GREEDY, x=#3, col sep=comma, ignore chars={"}] {data/#2};
\addplot[ring_ms] table [y=#4_RING__LED-GAMMA, x=#3, col sep=comma, ignore chars={"}] {data/#2};
\addplot[ring_ml_1svm] table [y=#4_RING_ML__ML-ONE_CLASS_SVM_LIKELIHOOD, x=#3, col sep=comma, ignore chars={"}] {data/#2};
\addplot[ring_ml_dnn] table [y=#4_RING_ML__ML-DNN, x=#3, col sep=comma, ignore chars={"}] {data/#2};
\addplot[bipartite_ml_1svm] table [y=#4_BIPARTITE_ML__ML-ONE_CLASS_SVM_LIKELIHOOD, x=#3, col sep=comma, ignore chars={"}] {data/#2};
\addplot[bipartite_ml_dnn] table [y=#4_BIPARTITE_ML__ML-DNN, x=#3, col sep=comma, ignore chars={"}] {data/#2};
\addplot[walks] table [y=#4_WALKS, x=#3, col sep=comma, ignore chars={"}] {data/#2};
\addplot[subgraph] table [y=#4_SUBGRAPH, x=#3, col sep=comma, ignore chars={"}] {data/#2};
\addplot[branch] table [y=#4_BRANCH, x=#3, col sep=comma, ignore chars={"}] {data/#2};
\addplot[branch_fast] table [y=#4_BRANCH_FAST, x=#3, col sep=comma, ignore chars={"}] {data/#2};
\addplot[bipartite] table [y=#4_BIPARTITE, x=#3, col sep=comma, ignore chars={"}] {data/#2};
\legend{\underline{\RINGOPT},\underline{\RINGGD},\underline{\RINGMS},\underline{\RINGML (\SVM)},\underline{\RINGML (\DNN)},\underline{\PREDICT (\SVM)},\PREDICT (\DNN),\WALKS,\SUBGRAPH,\BRANCH,\BRANCHFAST,\BP}
}
\newcommand{\addlineplotsnomlsmol}[6]{
\nextgroupplot[
title={#1}, 
legend to name=grouplegend, 
ylabel={#6},
xlabel={#5},
title style={yshift=-1ex,},
xlabel style={yshift=1ex, font=\footnotesize},
ylabel style={yshift=-1ex, font=\footnotesize},
xtick={1,4,7,10},
xticklabel style={font=\scriptsize},
yticklabel style={font=\scriptsize},
]
\addlegendimage{ring_opt, only marks};
\addlegendimage{ring_ms, only marks};
\addlegendimage{branch, only marks};
\addlegendimage{branch_fast, only marks};
\addlegendimage{bipartite, only marks};
\addplot[ring_opt] table [y=#4_RING__LED-LSAPE_OPTIMAL, x=#3, col sep=comma, ignore chars={"}] {data/#2};
\addplot[ring_ms] table [y=#4_RING__LED-GAMMA, x=#3, col sep=comma, ignore chars={"}] {data/#2};
\addplot[branch] table [y=#4_BRANCH, x=#3, col sep=comma, ignore chars={"}] {data/#2};
\addplot[branch_fast] table [y=#4_BRANCH_FAST, x=#3, col sep=comma, ignore chars={"}] {data/#2};
\addplot[bipartite] table [y=#4_BIPARTITE, x=#3, col sep=comma, ignore chars={"}] {data/#2};
\legend{\underline{\RINGOPT},\underline{\RINGMS},\BRANCH,\BRANCHFAST,\BP}
}
\newcommand{\addlineplotsnomlsmao}[6]{
\nextgroupplot[
title={#1}, 
legend to name=grouplegend, 
ylabel={#6},
xlabel={#5},
title style={yshift=-1ex,},
xlabel style={yshift=1ex, font=\footnotesize},
ylabel style={yshift=-1ex, font=\footnotesize},
xtick={3,5,7,9},
xticklabel style={font=\scriptsize},
yticklabel style={font=\scriptsize},
]
\addlegendimage{ring_opt, only marks};
\addlegendimage{ring_ms, only marks};
\addlegendimage{branch, only marks};
\addlegendimage{branch_fast, only marks};
\addlegendimage{bipartite, only marks};
\addplot[ring_opt] table [y=#4_RING__LED-LSAPE_OPTIMAL, x=#3, col sep=comma, ignore chars={"}] {data/#2};
\addplot[ring_ms] table [y=#4_RING__LED-GAMMA, x=#3, col sep=comma, ignore chars={"}] {data/#2};
\addplot[branch] table [y=#4_BRANCH, x=#3, col sep=comma, ignore chars={"}] {data/#2};
\addplot[branch_fast] table [y=#4_BRANCH_FAST, x=#3, col sep=comma, ignore chars={"}] {data/#2};
\addplot[bipartite] table [y=#4_BIPARTITE, x=#3, col sep=comma, ignore chars={"}] {data/#2};
\legend{\underline{\RINGOPT},\underline{\RINGMS},\BRANCH,\BRANCHFAST,\BP}
}
\begin{document}

\markboth{D.B. Blumenthal, J. Gamper, S. Bougleux \& L. Brun}{Upper Bounding the Graph Edit Distance Based on Rings and Machine Learning}

\catchline{}{}{}{DOI:}{10.1142/S0218001421510083}

\title{Upper Bounding the Graph Edit Distance Based on Rings and Machine Learning}

\author{David B. Blumenthal\footnote{Corresponding author.}}
\address{Technical University of Munich\\Chair of Experimental Bioinformatics\\Freising, Germany\\\email{david.blumenthal@wzw.tum.de}}

\author{Johann Gamper}
\address{Free University of Bozen-Bolzano\\Faculty of Computer Science\\Bozen, Italy\\\email{gamper@inf.unibz.it}}

\author{Sébastien Bougleux}
\address{Normandie Université\\ENSICAEN, UNICAEN, CNRS, GREYC\\Caen, France\\\email{sebastien.bougleux@unicaen.fr}}

\author{Luc Brun}
\address{Normandie Université\\ENSICAEN, UNICAEN, CNRS, GREYC\\Caen, France\\\email{luc.brun@ensicaen.fr}}

\maketitle

\begin{abstract}
The graph edit distance (\GED) is a flexible distance measure which is widely used for inexact graph matching. Since its exact computation is \NP-hard, heuristics are used in practice. A popular approach is to obtain upper bounds for \GED via transformations to the linear sum assignment problem with error-correction (\LSAPE). Typically, local structures and distances between them are employed for carrying out this transformation, but recently also machine learning techniques have been used. In this paper, we formally define a unifying framework \LSAPEGED for transformations from \GED to \LSAPE. We also introduce rings, a new kind of local structures designed for graphs where most information resides in the topology rather than in the node labels. Furthermore, we propose two new ring based heuristics \RING and \RINGML, which instantiate \LSAPEGED using the traditional and the machine learning based approach for transforming \GED to \LSAPE, respectively. Extensive experiments show that using rings for upper bounding \GED significantly improves the state of the art on datasets where most information resides in the graphs' topologies. This closes the gap between fast but rather inaccurate \LSAPE based heuristics and more accurate but significantly slower \GED algorithms based on local search.
\end{abstract}

\keywords{Graph edit distance; graph matching; graph similarity search; machine learning.}

\section{Introduction}\label{sec:intro}

Labeled graphs can be used for modeling various kinds of objects, such as chemical compounds, images, and molecular structures. Because of this flexibility, they have received increasing attention over the past years. One task researchers have focused on is the following: Given a database $\mathcal{G}$ that contains labeled graphs, find all graphs $G\in\mathcal{G}$ that are sufficiently similar to a query graph $H$ or find the $k$ graphs from $\mathcal{G}$ that are most similar to $H$ \cite{foggia:2014aa,vento:2015aa}. Being able to quickly answer graph similarity queries is crucial for the development of performant pattern recognition techniques in various application domains \cite{stauffer:2017aa}, such as keyword spotting in handwritten documents \cite{stauffer:2016aa} and cancer detection \cite{ozdemir:2013aa}. 

For answering graph similarity queries, a distance measure between two labeled graphs $G$ and $H$ has to be defined. A very flexible, sensitive and therefore widely used measure is the graph edit distance (\GED), which is defined as the minimum cost of an edit path between $G$ and $H$ \cite{bunke:1983aa}. An edit path is a sequence of graphs starting at $G$ and ending at a graph that is isomorphic to $H$ such that each graph on the path can be obtained from its predecessor by applying one of the following edit operations: adding or deleting an isolated node or an edge, and relabelling an existing node or edge. Each edit operation comes with an associated non-negative edit cost, and the cost of an edit path is defined as the sum of the costs of its edit operations. \GED inherits metric properties from the underlying edit costs \cite{justice:2006aa}. For instance, if $\mathfrak{G}$ is the domain of graphs with real-valued node and edge labels and the edit costs are defined as the Euclidean distances between the labels, then \GED is a metric on $\mathfrak{G}$.

\GED is mainly used in settings where we have to answer fine-grained similarity queries for (possibly very many) rather small graphs. For instance, this is the case in keyword spotting in handwritten documents, cancer  detection, and drug discovery \cite{stauffer:2017aa}. If the graphs are larger, faster approaches such as embedding the graphs into vector spaces and then comparing their vector representations may be used \cite{vento:2015aa,foggia:2014aa}. The drawback of these faster techniques is that a substantial part of the local information encoded in the original graphs is lost when embedding them into vector spaces. Whenever this information loss is intolerable, it is advisable to compare the graphs directly in the graph space\,---\,and one of the most commonly used distance measure for doing so is \GED.

\subsection{Related work}\label{sec:intro:related}

Computing \GED is a very difficult problem. It has been shown that the problem of computing \GED is \NP-hard even for uniform edit costs \cite{zeng:2009aa}, and \APX-hard for metric edit costs \cite{lin:1994aa}. Even worse: since, by definition of \GED, it holds that $\GED(G,H)=0$ just in case $G$ and $H$ are isomorphic, approximating \GED within any approximation ratio is \GI-hard. These theoretical complexities are mirrored by the fact that, in practice, no available exact algorithm can reliably compute \GED on graphs with more than 16 nodes \cite{blumenthal:2020ac}. 

Because of \GED's complexity, research has
mainly focused on heuristics \cite{blumenthal:2020aa}. The development of heuristics was
particularly triggered by the presentation of the algorithms \BP
\cite{riesen:2009aa} and \STAR \cite{zeng:2009aa}, which use
transformations to the linear sum assignment problem with error
correction (\LSAPE)\,---\,a variant of the linear sum assignment
problem (\LSAP) where rows and columns may also be inserted and
deleted\,---\,to compute upper bounds for \GED. \BP and \STAR work as
follows: First, the nodes of the input graphs are associated
with local structures composed of branches (\BP) or stars (\STAR),
respectively. The branch of a node is defined as the node itself together with its
incident edges, whereas the star additionally contains the incident edges' terminal
nodes.  Then, suitable distance measures
between, respectively, branches and stars are used to populate
instances of \LSAPE whose rows and columns correspond to the input graphs' nodes. Finally, the solution of the \LSAPE instance is
interpreted as an edit path whose cost is
returned as upper bound for \GED.

Following \BP and \STAR, many similar algorithms have been
proposed. Like \BP, the algorithms \BRANCHUNI \cite{zheng:2015aa},
\BRANCHFAST \cite{blumenthal:2018aa}, and \BRANCH
\cite{blumenthal:2018aa} use branches as local structures, but use
slightly different distance measures between the branches that allow
to also derive lower bounds.  The main disadvantage of these
algorithms as well as of \BP and \STAR is that, due to the very narrow
locality of the local structures, they yield unsatisfactorily loose
upper bounds on datasets where the nodes only carry little information 
and most information instead resides in the graphs' topologies. 

In order to produce tight upper bounds even if there is little information on the nodes, the algorithms \SUBGRAPH
\cite{carletti:2015aa} and \WALKS \cite{gauzere:2014aa} associate the
nodes with larger local structures\,---\,namely, subgraphs of fixed
radiuses and sets of walks of fixed lengths. The drawback of \SUBGRAPH
is that it runs in polynomial time only if the input
graphs have constantly bounded maximum degrees. \WALKS avoids this
blowup, but only models constant edit costs
and uses local structures that may contain redundant information due
to multiple inclusions of nodes and edges. It has also been
suggested to tighten the upper bounds by incorporating node centrality
measures into the \LSAPE instances \cite{riesen:2014ad,serratosa:2015aa}.

Recently, two machine learning based heuristics for \GED have been
proposed, which are closely related to the aforementioned \LSAPE based approaches. 
The algorithm \PREDICT \cite{riesen:2016aa} calls \BP to
compute an \LSAPE instance
and uses statistics of the \LSAPE instance to define feature vectors
for all node assignments. \PREDICT then trains a support vector
classifier (\SVC)
to predict if a node edit assignment is contained in an optimal edit
path. The algorithm \NGM
\cite{cortes:2018aa} defines feature vectors of node substitutions in
terms of the input graphs' node labels and node degrees. Given a set
of optimal edit paths, \NGM trains a deep neural network (\DNN) to
output a value close to $0$ if the substitution is predicted to be in
an optimal edit path, and close to $1$, otherwise. The output is used
to populate an instance of the linear sum assignment problem (\LSAP),
whose solution induces an upper bound for \GED. \NGM does not
support node insertions and deletions and can hence be used only for
equally sized graphs. Moreover, it ignores edge labels and assumes
that the node labels are real-valued vectors. \PREDICT works for
general graphs but does not yield an upper bound for \GED.

Not all heuristics for \GED build upon transformations to \LSAPE. For instance, further machine learning based algorithms have been suggested in Ref.~\refcite{li:2018aa,li:2019aa,bai:2019aa}. 
In Ref.~\refcite{li:2018aa}, Bayesian networks are employed to learn probabilities $\mathrm{Pr}(\GED(G,H)\leq\tau)$, which are then used to approximatively answer graph similarity queries.
In Ref.~\refcite{li:2019aa,bai:2019aa}, graph neural networks are used to learn \GED estimates from ground truth training data. All of these approaches are designed for
uniform edit costs only, \ie, for the case where all edit operations have the same cost. This assumption simplifies the concept of \GED, but is not met in most real-world application scenarios where \GED is used \cite{stauffer:2017aa}. Moreover, no assignments between the nodes of the compared graphs are computed, and the returned estimates for \GED are not guaranteed to be upper or lower bounds.

The tightest lower bounds for \GED are computed by heuristics based on linear programming \cite{justice:2006aa,lerouge:2016aa,lerouge:2017aa}. These algorithms formulate the problem of computing \GED as integer linear programs and call solvers such as CPLEX \cite{cplex:2016aa} or Gurobi \cite{gurobi-optimization:2018aa} to solve the continuous relaxations. Subsequently, the costs of the optimal continuous solutions are returned as lower bounds for \GED. For uniform edit costs, heuristics based on graph partitioning also yield good lower bounds \cite{zhao:2013aa,zheng:2015aa,liang:2017aa,zhao:2018aa,kim:2019ab}. In these approaches, one of the two input graphs is partitioned into small subgraphs. Subsequently, it is counted how many of the small subgraphs are not subgraph-isomorphic to the other input graph. If the edit costs are uniform, this count is a lower bound for \GED.

For upper bounding \GED, \LSAPE based heuristics are still among the most competitive approaches, as shown in a recent experimental evaluation survey \cite{blumenthal:2020aa}. In terms of accuracy, they are outperformed only by algorithms using variants of local search \cite{zeng:2009aa,riesen:2015ac,ferrer:2015ab,riesen:2017aa,bougleux:2017aa,daller:2018aa,boria:2018aa,blumenthal:2018ac,boria:2020aa}. However, these algorithms are much slower than \LSAPE based heuristics. Among local search based approaches, \IPFP is the best performing algorithm  \cite{bougleux:2017aa,daller:2018aa,boria:2018aa,blumenthal:2018ac,boria:2020aa}. In Ref.~\refcite{blumenthal:2020aa}, it is shown empirically that, on six benchmark datasets, \IPFP's upper bound exceeds the best available lower bounds, and hence the true \GED, by at most \SI{4.23}{\percent}. 

\subsection{Contributions}

In this paper, we formally describe a paradigm \LSAPEGED that generalizes all existing transformations from \GED to \LSAPE. Classical instantiations of \LSAPEGED such as \BP, \STAR, \BRANCHUNI, \BRANCHFAST, \BRANCH, \SUBGRAPH, and \WALKS use local structures and distance measures between them for the transformation. We introduce a partial order that compares classical instantiations in terms of the employed local structures' topological information content, and use it to systematically compare \BP, \STAR, \BRANCHUNI, \BRANCHFAST, \BRANCH, \SUBGRAPH, and \WALKS.

We also suggest
a new, machine learning based approach inspired by the algorithms
\PREDICT and \NGM: During training, feature vectors for all possible
node assignments are constructed and a machine learning
framework is trained to output a value close to $0$ if
a node assignment is predicted to be contained in an optimal edit path,
and a value close to $1$, otherwise. At runtime, the output of the
machine learning framework is fed into an \LSAPE instance. 

As mentioned above, \PREDICT and \NGM use \SVC{s} or \DNN{s} as their 
machine learning frameworks. They hence require training data that consists
of node assignments some of which are and some of which are not contained
in optimal edit paths. We argue that constructing such training data 
in a clean way is computationally infeasible. In order to overcome this problem, we suggest to use 
one class support vector machines (\SVM) instead of \SVC{s} and \DNN{s}.

Next, we present a new kind of local
structures\,---\,namely, rings of fixed sizes. Rings are sequences of disjoint node and edge sets at fixed distances from a root node. Like
the local structures used by \SUBGRAPH and \WALKS, rings are primarily designed for graphs where most information is encoded in the topologies rather than in the node labels. Using the partial order introduced before, we prove that rings indeed capture more topological information than the local
structures used by the baseline approaches \BP, \STAR, \BRANCHUNI, \BRANCHFAST, and \BRANCH. The
advantage of rings \wrt subgraphs is that rings can be processed in polynomial
time. The advantage \wrt walks is that rings model general edit costs
and avoid redundancies due to multiple inclusions of nodes and edges.

Subsequently, we propose \RING and \RINGML, two instantiations of \LSAPEGED
that make crucial use of rings. \RING adopts the classical approach,
\ie, carries out the transformation via a suitably defined ring
distance measure. In contrast to that, \RINGML uses rings to construct
feature vectors for the node assignments and then uses machine
learning techniques to carry out the transformation. 

An extensive
empirical evaluation shows that \RINGML shows very promising
potential and that, among all instantiations of
\LSAPEGED, \RING produces the tightest upper bound for \GED, especially on datasets where most information is encoded in the graphs's topologies. The newly proposed heuristic \RING hence closes the gap between existing instantiations of \LSAPEGED and the tighter but much slower local search algorithm \IPFP. In sum, our paper contains the following contributions:

\begin{itemize}
\item We formalize the paradigm \LSAPEGED and the
  classical, local structure distance based approach for transforming
  \GED to \LSAPE, and show how to use machine learning for
  this purpose (\Cref{sec:paradigm}).
\item We argue that \SVM{s} instead of classifiers such as
\DNN{s} or \SVC{s} should be used in machine learning based transformations to \LSAPE (\Cref{sec:paradigm}). 
\item We introduce rings, a new kind of local structures designed to yield tight upper bounds if most information is encoded in the graphs' topologies rather than in the node labels (\Cref{sec:rings}).
\item We present two new \LSAPEGED instantiations \RING and \RINGML (\Cref{sec:heuristics}).
\item We report results of extensive experiments (\Cref{sec:exp}).
\end{itemize}

This paper extends the results presented in Ref.~\refcite{blumenthal:2018ad},
where we informally described \LSAPEGED, introduced rings, and
presented and preliminarily evaluated \RING. In particular, the following contributions are new: formal definition of \LSAPEGED, the use of machine learning techniques in general and \SVM{s} in particular for transforming \GED to \LSAPE, the algorithm \RINGML, and additional experiments.

The rest of the paper is organized as follows:
In \Cref{sec:prelim}, we introduce concepts and notations.
In \Crefrange{sec:paradigm}{sec:exp}, we present our contributions. \Cref{sec:conc} concludes the paper.

\section{Preliminaries}\label{sec:prelim}

We consider undirected labeled graphs $G=(V^G,E^G,\ell^G_V,\ell^G_E)$
from a domain of graphs $\mathfrak{G}$. $V^G$ and $E^G$ are sets of
nodes and edges, $\Sigma_V$ and $\Sigma_E$ are label alphabets, and
$\ell^G_V:V^G\to\Sigma_V$ and $\ell^G_E:E^G\to\Sigma_E$ are labeling
functions.
$\mathfrak{I}\defined \{(G,u)\mid G\in\mathfrak{G}\land u\in
(V^G\cup{\epsilon})\}$
is the set of all graph-node incidences and
$\mathfrak{A}\defined \{(G,H,u,v)\mid
(G,u)\in\mathfrak{I}\land(H,v)\in\mathfrak{I}\land(u\neq\epsilon\lor
v\neq\epsilon)\}$
is the set of all node assignments. The symbol $\epsilon$ denotes dummy
nodes and edges as well as their labels.  For each
$N\in\mathbb{N}_{\geq1}$, we define
$[N]\defined\{n\in\mathbb{N}_{\geq1}\mid n\leq N\}$.

An edit path from a graph $G$ to a graph $H$ is a sequence of edit operations that transforms $G$ into $H$. There are six edit operations: Substituting a node or and edge from $G$ by a node or an edge from $H$, deleting an isolated node or an edge from $G$, and inserting an isolated node or an edge between two existing nodes into $H$. Each edit operation $o$ comes with an edit cost $c(o)$ defined in terms of edit cost functions $c_V:\Sigma_V\cup\{\epsilon\}\times\Sigma_V\cup\{\epsilon\}\to\R_{\geq 0}$ and $c_E:\Sigma_E\cup\{\epsilon\}\times\Sigma_E\cup\{\epsilon\}\to\R_{\geq 0}$ (\cf \Cref{tab:edit-costs}), which respect $c_V(\alpha,\alpha)=0$ and $c_E(\beta,\beta)=0$ for all $\alpha\in\Sigma_V\cup\{\epsilon\}$ and all $\beta\in\Sigma_E\cup\{\epsilon\}$. The cost of an edit path $P=(o_i)^r_{i=1}$ is defined as $c(P)\defined \sum^r_{i=1}c(o_i)$.

\begin{table}[!t]
\centering
\caption{Notation table.}\label{tab:edit-costs}
\small
\begin{tabular}{@{}ll@{}}
\toprule
syntax & semantics \\
\midrule
$\mathfrak{G}$ & graphs on label alphabets $\Sigma_V$ and $\Sigma_E$\\
$\mathfrak{I}$ & graph-node incidences in $\mathfrak{G}$\\
$\mathfrak{A}$ & node assignments between graphs in $\mathfrak{G}$\\
$c_V(\ell^G_V(u),\ell^H_V(v))$ & cost of substituting $u\in V^G$ by $v \in V^H$ \\
$c_E(\ell^G_E(e),\ell^H_E(f))$ & cost of substituting $e\in E^G$ by $f \in E^H$ \\
$c_V(\ell^G_V(u),\epsilon)$ & cost of deleting $u\in V^G$\\
$c_E(\ell^G_E(e),\epsilon)$ & cost of deleting $e\in E^G$ \\
$c_V(\epsilon,\ell^H_V(u))$ & cost of inserting $v \in V^H$\\
$c_E(\epsilon,\ell^H_E(f))$ & cost of inserting $f \in E^H$\\
$\Pi(G,H)$ & node maps between graphs $G$ and $H$\\
$c(\pi)$ & node map $\pi$'s induced edit cost\\
$\Pi(\C)$ & feasible solutions for \LSAPE instance \C\\
\bottomrule
\end{tabular}
\end{table}

\begin{definition}[\boldmath \GED]\label{def:ged}
Let $\Psi(G,H)$ be the set of all edit paths from a graph $G$ to a graph $H$. Then graph edit distance is defined as $\GED(G,H)\defined \min_{P\in\Psi(G,H)}{c(P)}$.
\end{definition}

\Cref{def:ged} is very intuitive but algorithmically inaccessible, because for recognizing an edit path as such, one has to solve the graph isomorphism problem.
Thus, algorithms for \GED use an alternative definition based on
the concept of error-correcting matchings or node maps \cite{bunke:1983aa,bougleux:2017aa,blumenthal:2020ac}.

\begin{definition}[Node Map]\label{def:node-map}
Let $G$ and $H$ be graphs. A relation $\pi\subseteq (V^G\cup\{\epsilon\})\times(V^H\cup\{\epsilon\})$ is called node map between $G$ and $H$ if and only if $|\{v \mid v\in(V^H\cup\{\epsilon\})\land(u,v)\in\pi\}|=1$ holds for all $u\in V^G$ and $|\{u \mid u\in(V^G\cup\{\epsilon\})\land(u,v)\in\pi\}|=1$ holds for all $v\in V^H$. We write $\pi(u)=v$ just in case $(u,v)\in\pi$ and $u\neq\epsilon$, and $\pi^{-1}(v)=u$ just in case $(u,v)\in\pi$ and $v\neq\epsilon$. $\Pi(G,H)$ denotes the set of all node maps between $G$ and $H$.
\end{definition}

A node map $\pi\in\Pi(G,H)$ specifies for all nodes $u\in V^G$ and $v\in V^H$ and all edges $e=(u_1,u_2)\in E^G$ and $f=(v_1,v_2)\in E^H$ if they are substituted, deleted, or inserted: If $\pi(u)=v$, the node $u$ is substituted by $v$; if $\pi(u)=\epsilon$, $u$ is deleted; and if $\pi^{-1}(v)=\epsilon$, $v$ is inserted. Similarly, if $(\pi(u_1),\pi(u_2))=(v_1,v_2)$, the edge $e$ is substituted by $f$; if $(\pi(u_1),\pi(u_2))\notin E^H$, $e$ is deleted; and if $(\pi^{-1}(v_1),\pi^{-1}(v_2))\notin E^G$, $f$ is inserted. A node map $\pi\in\Pi(G,H)$ hence induces an edit path from $G$ to $H$. It has been shown that, for computing \GED, it suffices to consider edit paths induced by node maps \cite{bunke:1983aa,riesen:2015ad,bougleux:2017aa,blumenthal:2020ac}.

\begin{theorem}[\boldmath Alternative Definition of \GED]\label{thm:ged-node-map}
Let $G$ and $H$ be graphs and $c(\pi)$ be the cost of the edit path induced by a node map $\pi\in\Pi(G,H)$. Then, under mild constraints on the edit costs that can be assumed to hold \mywlog (\cf Ref.~\refcite{blumenthal:2020ac} for details), it holds that $\GED(G,H)=\min_{\pi\in\Pi(G,H)}{c(\pi)}$.
\end{theorem}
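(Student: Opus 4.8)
The plan is to establish the two inequalities $\GED(G,H)\leq\min_{\pi\in\Pi(G,H)}c(\pi)$ and $\GED(G,H)\geq\min_{\pi\in\Pi(G,H)}c(\pi)$ separately. The first is immediate from the discussion preceding the theorem: each node map $\pi\in\Pi(G,H)$ induces an edit path $P_\pi\in\Psi(G,H)$ with $c(P_\pi)=c(\pi)$, so $\min_{\pi}c(\pi)\geq\min_{P\in\Psi(G,H)}c(P)=\GED(G,H)$ by \Cref{def:ged}. All the work is in the converse: starting from an optimal edit path $P^\star\in\Psi(G,H)$ with $c(P^\star)=\GED(G,H)$, I would construct a node map $\pi\in\Pi(G,H)$ with $c(\pi)\leq c(P^\star)$, which yields $\min_{\pi}c(\pi)\leq\GED(G,H)$.

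The first step towards the converse is to \emph{normalize} $P^\star$. An arbitrary edit path may relabel the same node or edge several times, delete a node or edge and later re-insert it, insert a node or edge and later delete it, or interleave operations so that it is unclear which node of $G$ is eventually identified with which node of $H$. The goal is to rewrite $P^\star$, without increasing its cost, into a \emph{reduced} path $P'$ in which deletions precede relabellings which precede insertions, no node or edge is relabelled more than once, and no node or edge is both inserted and deleted. The non-increase of cost is precisely where the mild constraints on the edit costs enter: assuming (as may be done \mywlog, \cf Ref.~\refcite{blumenthal:2020ac}; essentially, that $c_V$ and $c_E$ satisfy the triangle inequality) one may collapse a chain of relabellings $\alpha_0\to\alpha_1\to\cdots\to\alpha_k$ into the single relabelling $\alpha_0\to\alpha_k$, since $c_V(\alpha_0,\alpha_k)\leq\sum_i c_V(\alpha_{i-1},\alpha_i)$, and drop an insert--delete pair (non-negative cost removed) or a delete--re-insert pair (here one uses $c_V(\alpha,\gamma)\leq c_V(\alpha,\epsilon)+c_V(\epsilon,\gamma)$, and likewise for $c_E$). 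Some care is needed with the admissibility constraints of the edit model — a node may be deleted only once isolated, an inserted node starts isolated — but these are compatible with the claimed reordering.

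From the reduced path $P'$ one reads off the node map by tracking which node of $V^G$ is ultimately identified with which node of $V^H$: set $\pi(u)\defined v$ if $u\in V^G$ survives $P'$ as $v\in V^H$, $\pi(u)\defined\epsilon$ if $u$ is deleted by $P'$, and $\pi^{-1}(v)\defined\epsilon$ if $v\in V^H$ is inserted by $P'$. Because $P'$ relabels no element twice and neither inserts nor deletes the same element, this is a well-defined node map in the sense of \Cref{def:node-map}, and by construction the node and edge substitutions, deletions, and insertions it prescribes are exactly those carried out by $P'$. Hence $c(\pi)=c(P')\leq c(P^\star)=\GED(G,H)$, completing the converse and thus the proof.

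I expect the normalization step to be the main obstacle. One must enumerate all the ways an edit path can waste cost or scramble the node correspondence, verify that each rewriting move is cost-non-increasing under the triangle-inequality assumption, and check that the moves can be applied in a terminating order that respects the structural constraints of the edit model. The bookkeeping for edges is more delicate than for nodes, because both the cost and the admissibility of an edge operation depend on the current status of its endpoints; I would therefore fix the node correspondence first and then argue about edges relative to it, noting in particular that an edge $(u_1,u_2)\in E^G$ with $(\pi(u_1),\pi(u_2))\in E^H$ is charged $c_E(\ell^G_E(u_1,u_2),\ell^H_E(\pi(u_1),\pi(u_2)))$ by $\pi$, which is bounded by the total edge cost $P'$ expends on that edge via the triangle inequality (also through $\epsilon$, to cover a delete--re-insert of the edge).
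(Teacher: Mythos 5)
The paper does not actually prove \Cref{thm:ged-node-map}: it states the result and defers the proof to the literature (Ref.~\refcite{blumenthal:2020ac} and the other works cited just before the theorem), so there is no in-paper argument to compare yours against. Judged on its own, your outline is the standard proof from those references and is correct in structure. The easy inequality $\GED(G,H)\leq\min_{\pi}c(\pi)$ is exactly as you say, and you correctly locate all the difficulty in the converse, correctly identify the normalization of an arbitrary edit path as the key step, and correctly identify where the \enquote{mild constraints} enter: they are triangle inequalities on $\Sigma_V\cup\{\epsilon\}$ and $\Sigma_E\cup\{\epsilon\}$ (so that relabelling chains collapse and delete--re-insert detours are no cheaper than direct substitutions), and they hold \mywlog because replacing the cost functions by their metric closures does not change \GED.

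Two places where your sketch is thinnest and where a full write-up must do real work. First, the identity tracking: \Cref{def:ged} only requires the path to end at a graph \emph{isomorphic} to $H$, so \enquote{$u$ survives $P'$ as $v$} is only defined after composing the survival relation along the path with a terminal isomorphism; this is exactly why the naive definition is \enquote{algorithmically inaccessible}, and the node map must be read off through that isomorphism. Second, termination and admissibility of the rewriting system: you must argue that the cost-non-increasing moves (collapse relabel chains, cancel insert--delete and delete--re-insert pairs, reorder so that edge operations respect the isolation constraints on their endpoints) can be applied exhaustively without cycling, and that the resulting reduced path really is one induced by a node map, i.e., that its edge operations are fully determined by its node operations. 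Your plan of fixing the node correspondence first and then bounding the per-edge cost via the triangle inequality through $\epsilon$ is the right way to organize that bookkeeping; none of the remaining gaps looks fatal, but they are where the length of the actual proof lives.
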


Since node maps are much easier objects to work with than edit paths, \Cref{thm:ged-node-map} renders \GED algorithmically accessible. In particular, it implies that each node map induces an upper
bound for \GED. This observation is used by approximative methods for
\GED, which heuristically compute a node map that induces a tight upper
bound. For finding such a node map, many heuristics \cite{riesen:2009aa,zeng:2009aa,zheng:2015aa,blumenthal:2018aa,carletti:2015aa,gauzere:2014aa,riesen:2014ad,serratosa:2015aa} use the linear sum
assignment problem with error correction (\LSAPE)
\cite{bougleux:2016aa,bougleux:2020aa}\,---\,although usually not under this name, since \LSAPE was formalized after the presentation of most of these heuristics.

\begin{definition}[LSAPE]\label{def:lsape}
Let  $\C=(c_{i,k})\in\mathbb{R}^{(n+1)\times(m+1)}$ be a matrix with $c_{n+1,m+1}=0$. A relation $\pi\subseteq[n+1]\times[m+1]$ is a feasible \LSAPE solution for \C, if and only if $|\{k \mid k\in[m+1]\land(i,k)\in\pi\}|=1$ holds for all $i\in[n]$ and $|\{i \mid i\in[n+1]\land(i,k)\in\pi\}|=1$ holds for all $k\in[m]$. We write $\pi(i)=k$ if $(i,k)\in\pi$ and $i\neq n+1$; and $\pi^{-1}(k)=i$ if $(i,k)\in\pi$ and $k\neq m+1$. The set of all feasible \LSAPE solutions for \C is denoted by $\Pi(\C)$. The cost of a feasible solution $\pi\in\Pi(\C)$ is defined as $\C(\pi)\defined \sum_{(i,k)\in\pi}c_{i,k}$. The set of all optimal \LSAPE solutions for \C is defined as $\Pi^\star(\C)\defined \argmin_{\pi\in\Pi(\C)}\C(\pi)$.
\end{definition}

An optimal \LSAPE solution $\pi\in\Pi^\star(\C)$ can be computed in
$\mathcal{O}(\min\{n,m\}^2\max\{n,m\})$ time \cite{bougleux:2020aa}. Once one
optimal solution has been found, for each $s\in[|\Pi^\star(\C)|]$, a
solution set $\Pi^\star_s(\C)\subseteq\Pi^\star(\C)$ of size $s$ can
be enumerated in $\mathcal{O}(nm\sqrt{n+m}+s\log{(n+m)})$ time
\cite{uno:2001aa}. Greedy suboptimal solutions can be computed in
$\mathcal{O}(nm)$ time \cite{riesen:2015aa}.

Node maps and feasible solutions for \LSAPE are closely
related. Consider graphs $G$ and $H$ and an \LSAPE instance
$\C\in\mathbb{R}^{(|V^G|+1)\times(|V^H|+1)}$. 
\Cref{def:node-map} and \Cref{def:lsape} imply that we can
identify the set $\Pi(G,H)$ of all node maps between $G$ and $H$ with
the set $\Pi(\C)$ of all feasible \LSAPE solutions for \C: For all
$i\in[|V^G|]$ and all $k\in[|V^H|]$, we associate \C's $i^\text{th}$
row with the node $u_i\in V^G$ and \C's $k^\text{th}$ column with the
node $v_k\in V^H$.  The last row and the last column of \C are
associated with the dummy node $\epsilon$. Then, each feasible
\LSAPE solution $\pi$ for \C yields an upper bound for
$\GED(G,H)$\,---\,namely, the cost $c(\pi)$ of the edit path induced
by $\pi$'s interpretation as a node map.

\begin{figure}[t]
\centering
\begin{tikzpicture}[every node/.style={font=\footnotesize}, scale=1, transform shape]
\tikzset{dot/.style={draw,fill=black,shape=circle,inner sep=1.5pt}}
\tikzset{eps/.style={draw,shape=circle,inner sep=.5pt}}
\node (G) at (0,2.5) {\large $G$};
\draw[step=.5cm,gray,very thin] (0,0) grid (3,2);
\node[dot,label=left:$u_1$] (u1) at (0.69,0.27) {};
\node[dot,label=left:$u_2$] (u2) at (1.40,1.85) {};
\node[dot,label=above:$u_3$] (u3) at (2.55,0.45) {};
\node[dot,label=left:$u_4$] (u4) at (0.93,1.37) {}; 
\node[dot,label=above:$u_5$] (u5) at (2.00,1.38) {};
\node[eps] (epsG) at (1.5,2.25) {$\epsilon$};
\draw (u1) -- (u2);
\draw (u2) -- (u3);
\draw (u4) -- (u5);
\draw[->] (0,0) -- node[below] (xG) {$x$} (3.25,0); 
\draw[->] (0,0) -- node[left] (yG) {$y$} (0,2.25);
\gettikzxy{(xG)}{\xGx}{\xGy};
\gettikzxy{(yG)}{\yGx}{\yGy};
\node (originG) at (\yGx,\xGy) {$0$};
\node at (0.5,\xGy) {$\nicefrac{1}{2}$};
\node at (\yGx,0.5) {$\nicefrac{1}{2}$};
\node (H) at (4,2.5) {\large $H$};
\draw[step=.5cm,gray,very thin] (4,0) grid (7,2);
\node[dot,label=above:$v_1$] (v1) at ($(0.92,0.32)+(4,0)$) {};
\node[dot,label=right:$v_2$] (v2) at ($(1.76,1.81)+(4,0)$) {};
\node[dot,label=right:$v_3$] (v3) at ($(2.30,0.21)+(4,0)$) {};
\node[dot,label=above:$v_4$] (v4) at ($(0.92,0.85)+(4,0)$) {};
\node[eps] (epsH) at (5.5,2.25) {$\epsilon$}; 
\draw (v1) -- (v2);
\draw (v2) -- (v3);
\draw (v3) -- (v4);
\draw[->] (4,0) -- node[below] (xH) {$x$} (7.25,0); 
\draw[->] (4,0) -- node[left] (yH) {$y$} (4,2.25);
\gettikzxy{(xH)}{\xHx}{\xHy};
\gettikzxy{(yH)}{\yHx}{\yHy};
\node (originH) at (\yHx,\xHy) {$0$};
\node (origin) at (4.5,\xHy) {$\nicefrac{1}{2}$};
\node (origin) at (\yHx,0.5) {$\nicefrac{1}{2}$};
\begin{scope}[on background layer]
\foreach \i in {1,...,4} {
\draw[|->,thick,gray] (u\i) -- (v\i);
}
\draw[|->,thick,gray] (u5) -- (epsH);
\node[gray] at (3.5,1.5) {\large $\pi$};
\end{scope}
\end{tikzpicture}
\caption{Two graphs $G$ and $H$ from the \letter dataset and a node map $\pi$.}\label{fig:letters}
\end{figure}
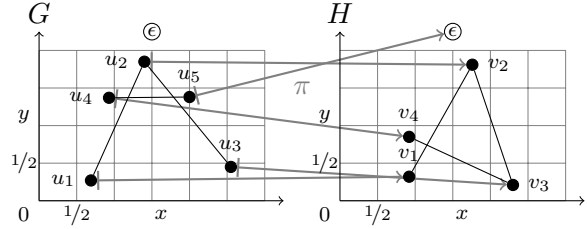

\begin{example}[Most Important Definitions]\label{expl:letters}
The graphs $G$ and $H$ shown in \Cref{fig:letters} are taken from the \letter dataset and represent distorted letter drawings \cite{riesen:2008aa}. Nodes are labeled with two-dimensional, non-negative Euclidean coordinates. Edges are unlabeled. Hence, we have $\Sigma_V=\R_{\geq0}^2$ and $\Sigma_E=\{1\}$. In Ref.~\refcite{riesen:2010aa}, the following edit cost functions are suggested: $c_E(1,\epsilon)\coloneqq c_E(\epsilon,1)\coloneqq0.425$, $c_V(\alpha,\alpha^\prime)\coloneqq0.75\norm{\alpha-\alpha^\prime}$, and $c_V(\alpha,\epsilon)\coloneqq c_V(\epsilon,\alpha)\coloneqq0.675$, where $\norm{\cdot}$ is the Euclidean norm. The node map $\pi\in\Pi(G,H)$ induces the following edit operations: substitutions of the nodes $u_i$ by the nodes $v_i$, for all $i\in[4]$, deletion of node $u_5$, substitutions of the edges $(u_i,u_{i+1})$ by the edges $(v_i,v_{i+1})$, for all $i\in[2]$, deletion of the edge $(u_4,u_5)$, insertion of the edge $(v_3,v_4)$. By summing the induced edit costs, we obtain that $\pi$'s induced edit path has cost $c(\pi)=2.574$ and hence $\GED(G,H)\leq2.574$.	
\end{example}

\section{\LSAPE Based Upper Bounds for \GED}\label{sec:paradigm}

In this section, we present the paradigm \LSAPEGED (\Cref{sec:paradigm:lsape-ged}). We 
show that existing LSAPE based approaches for upper bounding \GED are
instances of \LSAPEGED and define a partial order in terms of their local structures' topological information content (\Cref{sec:paradigm:classical}). Subsequently, 
we present a new machine learning technique to create \LSAPE
instances (\Cref{sec:paradigm:ml}). We also identify a problem that occurs if classifiers
such as \SVC{s} or \DNN{s} are used for creating the \LSAPE instances, and suggest that one should resort to \SVM{s} to overcome it.

\subsection{Overall structure of the paradigm \LSAPEGED}\label{sec:paradigm:lsape-ged}

\Cref{fig:lsape} shows how to use \LSAPE for upper bounding \GED:
Given two graphs $G$ and $H$, first an \LSAPE instance
$\C\in\mathbb{R}^{(|V^G|+1)\times(|V^H|+1)}$ is constructed. Different
strategies can be used for the construction, which are detailed in
\Cref{sec:paradigm:classical} and
\Cref{sec:paradigm:ml}. Subsequently, existing \LSAPE based heuristics
call a greedy or an optimal solver to compute an \LSAPE solution
$\pi\in\Pi(\C)$ and interpret $\pi$ as a node map whose induced
edit cost is returned as an upper bound for \GED. Alternatively, given
a constant $s>1$, we suggest to use an optimal \LSAPE solver in
combination with the enumeration procedure in Ref.~\refcite{uno:2001aa} in
order to generate a set $\Pi^\star_s(\C)\subseteq\Pi^\star(\C)$ of
optimal \LSAPE solutions with size
$|\Pi^\star_s(\C)|=\min\{s,|\Pi^\star(\C)|\}$. A tightened upper bound
for \GED can then be obtained by minimizing the induced edit cost over
the solution set $\Pi^\star_s(\C)$. Using the enumeration procedure
in Ref.~\refcite{uno:2001aa} to compute solution sets $\Pi^\star_s(\C)$ with
$s>1$ was suggested in Ref.~\refcite{daller:2018aa}, where $\Pi^\star_s(\C)$'s
elements are used as initial solutions for a refinement algorithm
based on local search. However, to the best of our knowledge, this
technique has never been used for tightening the upper bounds produced
by \LSAPE based heuristics.

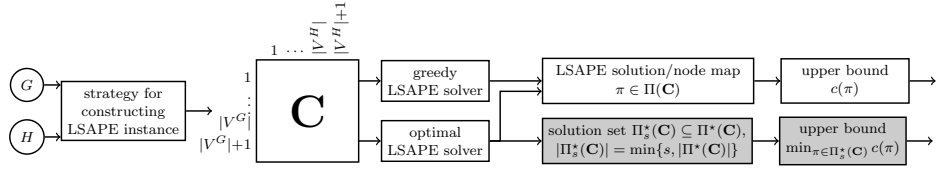
\begin{figure*}[t]
\centering
\begin{adjustbox}{max width=\linewidth}
\tikzset{row_label/.style={anchor=east, xshift=-.25em}}
\tikzset{col_label_long/.style={anchor=west, yshift=.25em, rotate=90}}
\tikzset{col_label_short/.style={anchor=south, yshift=.25em}}
\begin{tikzpicture}[every node/.style={font=\footnotesize}, scale=1, transform shape]
\node[thick, circle, draw] (G) at (0,.5) {$G$};
\gettikzxy{(G)}{\Gx}{\Gy}
\node[thick, circle, draw] (H) at (0,-.5) {$H$};
\gettikzxy{(H)}{\Hx}{\Hy}
\coordinate (origin) at ($(G)!0.5!(H)$);
\node[thick, rectangle, draw, align=center, right = of origin,xshift=-1em] (constr) {strategy for \\ constructing \\ \LSAPE instance};
\gettikzxy{(constr.west)}{\constrx}{\constry}
\matrix (C) [
matrix of nodes,
nodes in empty cells,
row sep =-\pgflinewidth,
column sep = -\pgflinewidth,
nodes={minimum height=1.25em, minimum width=1.25em},
right=of constr,
xshift=1em
]
{ 
&     &          &         &           \\
&     &          &         &           \\
&     &          &         &           \\
&     &          &         &           \\
};
\node[row_label] (row_1) at (C-1-1.west) {$1$};
\node[row_label] (row_2) at (C-2-1.west) {$\vdots$};
\node[row_label] (row_3) at (C-3-1.west) {$|V^G|$};
\node[row_label] (row_4) at (C-4-1.west) {$|V^G|{+}1$};
\node[col_label_short] (col_1) at (C-1-1.north) {$1$};
\node[col_label_short] (col_2) at (C-1-2.north) {$\cdots$};
\node[col_label_long] (col_3) at (C-1-3.north) {$|V^H|$};
\node[col_label_long] (col_4) at (C-1-4.north) {$|V^H|{+}1$};
\node[thick, rectangle, draw, align=center, anchor=north west, minimum height=5ex] (greedy) at ($(C.north east)+(0,0)$) {greedy \\ \LSAPE solver};
\gettikzxy{(greedy)}{\greedyx}{\greedyy}
\node[thick, rectangle, draw, align=center, anchor=south west, minimum height=5ex] (optimal) at ($(C.south east)+(0,0)$) {optimal \\ \LSAPE solver};
\gettikzxy{(optimal.west)}{\optimalwestx}{\optimalwesty}
\gettikzxy{(optimal.east)}{\optimaleastx}{\optimaleasty}

\node[thick, draw, right=of optimal, align=center, minimum height=5ex, fill=black!20] (solutions) {solution set $\Pi^\star_s(\C)\subseteq\Pi^\star(\C)$, \\ $|\Pi^\star_s(\C)|=\min\{s,|\Pi^\star(\C)|\}$};

\node[thick, draw, align=center, minimum height=5ex, anchor=west, fill=black!20] (upper-bound-2) at ($(solutions.east)+(.5,0)$) {upper bound \\ $\min_{\pi\in\Pi^\star_s(\C)}c(\pi)$};

\coordinate (solvers) at ($(greedy.west)!.5!(optimal.west)$);
\gettikzxy{(solutions.west)}{\solutionswestx}{\solutionswesty}
\gettikzxy{(solutions.east)}{\solutionseastx}{\solutionseasty}

\gettikzxy{(upper-bound-2.east)}{\upperboundeastx}{\upperboundeasty}
\gettikzxy{(upper-bound-2.west)}{\upperboundwestx}{\upperboundwesty}

\node[thick, draw, align=center, anchor=west, minimum width=\solutionseastx-\solutionswestx, minimum height=5ex, fill=white] (solution) at ($(greedy.east)+(1,0)$) {\LSAPE solution/node map \\ $\pi\in\Pi(\C)$};

\node[thick, draw, align=center, anchor=west, minimum height=5ex,minimum width=\upperboundeastx-\upperboundwestx] (upper-bound) at ($(solution.east)+(.5,0)$) {upper bound \\ $c(\pi)$};

\gettikzxy{(solution)}{\solutionx}{\solutiony}

\gettikzxy{(upper-bound)}{\upperboundx}{\upperboundy}

\draw[thick, ->] (G) -- ($(\constrx,\Gy)+(0,0)$);
\draw[thick, ->] (H) -- ($(\constrx,\Hy)+(0,0)$);
\draw[->, thick] (constr) -- ($(constr.east)+(.5,0)$);
\draw[thick, ->] ($(optimal.west)+(-.75,0)$) -- (optimal.west);
\draw[thick, ->] ($(greedy.west)+(-.75,0)$) -- (greedy.west);
\draw[thick, ->] (optimal.east) -- (solutions.west);
\draw[thick, ->] (optimal.east) -- +(.2,0) |- ($(solution.west)+(0,-.2)$);
\draw[thick, ->] (greedy.east) -- (solution.west);
\draw[thick, ->] (solutions.east) -- (upper-bound-2.west);
\draw[thick, ->] (solution.east) -- (upper-bound.west);
\draw[thick, ->] (upper-bound.east) -- +(.5,0);
\draw[thick, ->] (upper-bound-2.east) -- +(.5,0);
\node[thick, fill=white, fit=(C-1-1) (C-4-4), draw] {};
\node at (barycentric cs:C-1-1=1,C-1-4=1,C-4-1=1,C-4-4=1) {\Huge \C};
\end{tikzpicture}
\end{adjustbox}
\caption{The paradigm \LSAPEGED. Our alternative solution strategy is displayed in grey.}\label{fig:lsape}
\end{figure*}

\subsection{Classical instantiations}\label{sec:paradigm:classical}

Classical instantiations of the paradigm \LSAPEGED construct the
\LSAPE instance \C by using local structures rooted at the nodes and
distance measures between them.  Formally, they define local structure
functions $\mathcal{S}:\mathfrak{I}\to\mathfrak{S}$ that map
graph-node incidences to elements of a suitably defined space of
local structures $\mathfrak{S}$, and distance measures
$d_\mathfrak{S}:\mathfrak{S}\times\mathfrak{S}\to\mathbb{R}_{\geq0}$
for the local structures. Given input graphs $G$ and $H$ on node sets
$V^G=\{u_1,\ldots,u_{|V^G|}\}$ and $V^H=\{v_1,\ldots,v_{|V^H|}\}$, the
\LSAPE instance $\C\in\mathbb{R}^{(|V^G|+1)\times(|V^H|+1)}$ is then
defined as
$c_{i,k}\defined d_\mathfrak{S}(\mathcal{S}(G,u_i),\mathcal{S}(H,v_k))$, 
$c_{i,|V^H|+1}\defined d_\mathfrak{S}(\mathcal{S}(G,u_i),\mathcal{S}(H,\epsilon))$, and
$c_{|V^G|+1,k}\defined d_\mathfrak{S}(\mathcal{S}(G,\epsilon),\mathcal{S}(H,v_k))$,
for all $(i,k)\in[|V^G|]\times[|V^H|]$. This classical strategy for populating \C is adopted by \BP \cite{riesen:2009aa}, \STAR \cite{zeng:2009aa}, \BRANCHUNI \cite{zheng:2015aa}, \BRANCH \cite{blumenthal:2018aa}, \BRANCHFAST \cite{blumenthal:2018aa}, \WALKS \cite{gauzere:2014aa}, and \SUBGRAPH \cite{carletti:2015aa}, as well as by the algorithm \RING proposed in this paper (\Cref{sec:heuristics:ring}). Also the node centrality based heuristics \cite{riesen:2014ad,serratosa:2015aa} can be subsumed under this model; here, the \enquote{local structures} are simply the nodes' centralities.

\begin{figure}[t]
\centering
\tikzset{row_label/.style={anchor=east, xshift=-.5em, font=\footnotesize}}
\tikzset{col_label/.style={anchor=south, yshift=.5em, font=\footnotesize}}
\begin{tikzpicture}
\matrix (C) [
matrix of nodes,
nodes in empty cells,
nodes={font=\small,minimum width=3em}
]
{ 
$\mathbf{0.177}$ & $1.406$ & $1.208$ & $0.468$ & $0.675$ \\
$1.203$ & $\mathbf{0.272}$ & $1.403$ & $0.832$ & $0.675$ \\
$1.226$ & $1.180$ & $\mathbf{0.260}$ & $1.259$ & $0.675$ \\
$0.788$ & $0.705$ & $1.346$ & $\mathbf{0.390}$ & $0.675$ \\
$1.135$ & $0.369$ & $0.906$ & $0.902$ & $\mathbf{0.675}$ \\
$0.675$ & $0.675$ & $0.675$ & $0.675$ & $0$ \\
};
\node[row_label] (row_1) at (C-1-1.west) {$1$};
\node[row_label] (row_2) at (C-2-1.west) {$2$};
\node[row_label] (row_3) at (C-3-1.west) {$3$};
\node[row_label] (row_4) at (C-4-1.west) {$4$};
\node[row_label] (row_5) at (C-5-1.west) {$5$};
\node[row_label] (row_6) at (C-6-1.west) {$6$};
\node[col_label] (col_1) at (C-1-1.north) {$1$};
\node[col_label] (col_2) at (C-1-2.north) {$2$};
\node[col_label] (col_3) at (C-1-3.north) {$3$};
\node[col_label] (col_4) at (C-1-4.north) {$4$};
\node[col_label] (col_4) at (C-1-5.north) {$5$};
\node[thick, fit=(C-1-1) (C-6-5) (C-5-5), draw] {};
\draw[dashed] ($(C-5-1.west)!.5!(C-6-1.west)$) -- ($(C-5-5.east)!.5!(C-6-5.east)$);
\draw[dashed] ($(C-1-4.north)!.5!(C-1-5.north)$) -- ($(C-6-4.south)!.5!(C-6-5.south)$);
\end{tikzpicture}
\caption{\LSAPE instance \C for the graphs shown in \Cref{fig:letters} constructed by the toy instantiation of \LSAPEGED described in \Cref{expl:lsape-ged}. Bold-faced cells correspond to the node assignments contained in optimal solution.}\label{fig:lsape-ged}
\end{figure}

\begin{example}[Classical Instantiations of \LSAPEGED]\label{expl:lsape-ged}
Consider a very simple classical instantiation of \LSAPEGED that uses the input graphs' node labels as its local structures and the node edit costs $c_V$ as the local structure distances. \Cref{fig:lsape-ged} shows the obtained \LSAPE instance \C for the graphs $G$ and $H$ shown in \Cref{fig:letters} under the assumption that $c_V$ is defined as detailed in \Cref{expl:letters}. Its optimal solution $\pi\defined\{(i,i)\mid i\in[5]\}$ selects the bold-faced cells of \C and corresponds to the node map shown in \Cref{fig:letters}. Its \LSAPE cost is hence $\C(\pi)=1.774$, and the induced upper bound for \GED equals $c(\pi)=2.574$.
\end{example}

The formal specification of the paradigm \LSAPEGED allows us to introduce a partial order $\succeq_T$ that orders local structure functions employed by classical instantiations \wrt their topological information content (\cf \Cref{def:partial-order}). The intuition behind $\succeq_T$ is that a local structure function $\mathcal{S}_1$ is topologically more informative than another local structure function $\mathcal{S}_2$ if fixing the local structures $\mathcal{S}_1(G,u)$ implies that also the local structures $\mathcal{S}_2(G,u)$ remain unchanged.

\begin{definition}[Partial Order $\succeq_T$]\label{def:partial-order}
A local structure function $\mathcal{S}_1$ is \emph{weakly topologically more informative} than a local structure function $\mathcal{S}_2$ (in symbols: $\mathcal{S}_1\succeq_T\mathcal{S}_2$), if and only if $\mathcal{S}_1(G,u)=\mathcal{S}_1(H,v)$ implies $\mathcal{S}_2(G,u)=\mathcal{S}_2(H,v)$. $\mathcal{S}_1$ is \emph{strictly topologically more informative} than $\mathcal{S}_2$ ($\mathcal{S}_1\succ_T\mathcal{S}_2$), if and only if $\mathcal{S}_1\succeq_T\mathcal{S}_2$ and $\mathcal{S}_2\nsucceq_T\mathcal{S}_1$. 
\end{definition}

\Cref{prop:partial-order} states that $\succeq_T$ is indeed a partial order.

\begin{proposition}\label{prop:partial-order}
Let $\mathbb{S}$ be the set of all local structure functions. Then $\succeq_T$ is a partial order on $\mathbb{S}/\!{\sim_T}$, where the equivalence relation $\sim_T$ is defined as $\mathcal{S}_1\sim_T\mathcal{S}_2 :\Leftrightarrow\mathcal{S}_1\succeq_T\mathcal{S}_2\land\mathcal{S}_2\succeq_T\mathcal{S}_1$.
\end{proposition}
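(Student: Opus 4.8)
The plan is to observe that $\succeq_T$ is a preorder on $\mathbb{S}$ and then invoke the standard fact that every preorder induces a partial order on the quotient by its symmetric part; the actual work consists in checking the preorder properties and verifying that they survive the passage to the quotient, all of which are short verifications.

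First I would check that $\succeq_T$ is reflexive and transitive on $\mathbb{S}$. Reflexivity is trivial, since the implication ``$\mathcal{S}(G,u)=\mathcal{S}(H,v)$ implies $\mathcal{S}(G,u)=\mathcal{S}(H,v)$'' always holds, so $\mathcal{S}\succeq_T\mathcal{S}$. For transitivity, given $\mathcal{S}_1\succeq_T\mathcal{S}_2$ and $\mathcal{S}_2\succeq_T\mathcal{S}_3$, I would fix an arbitrary pair of graph-node incidences $(G,u),(H,v)\in\mathfrak{I}$ with $\mathcal{S}_1(G,u)=\mathcal{S}_1(H,v)$ and chain the two implications on this same pair to obtain first $\mathcal{S}_2(G,u)=\mathcal{S}_2(H,v)$ and then $\mathcal{S}_3(G,u)=\mathcal{S}_3(H,v)$, which gives $\mathcal{S}_1\succeq_T\mathcal{S}_3$.

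Next I would verify that $\sim_T$ is an equivalence relation: reflexivity and transitivity follow from the corresponding properties of $\succeq_T$ just established, and symmetry is immediate because $\sim_T$ is defined by a conjunction that is symmetric in its two arguments. Hence $\mathbb{S}/\!{\sim_T}$ and its classes $[\mathcal{S}]$ are well defined. I would then show that $\succeq_T$ is well defined on classes: if $\mathcal{S}_1\sim_T\mathcal{S}_1'$, $\mathcal{S}_2\sim_T\mathcal{S}_2'$ and $\mathcal{S}_1\succeq_T\mathcal{S}_2$, then by transitivity $\mathcal{S}_1'\succeq_T\mathcal{S}_1\succeq_T\mathcal{S}_2\succeq_T\mathcal{S}_2'$, so $\mathcal{S}_1'\succeq_T\mathcal{S}_2'$; thus $[\mathcal{S}_1]\succeq_T[\mathcal{S}_2]:\Leftrightarrow\mathcal{S}_1\succeq_T\mathcal{S}_2$ is a legitimate definition. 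Finally, reflexivity and transitivity of $\succeq_T$ on the quotient are inherited directly from $\mathbb{S}$, and antisymmetry holds because $[\mathcal{S}_1]\succeq_T[\mathcal{S}_2]$ together with $[\mathcal{S}_2]\succeq_T[\mathcal{S}_1]$ says exactly that $\mathcal{S}_1\sim_T\mathcal{S}_2$, \ie $[\mathcal{S}_1]=[\mathcal{S}_2]$.

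I do not expect a genuine obstacle here: the proposition is an instance of the elementary ``preorder induces a partial order on the quotient'' construction. The one point that deserves mild care is the transitivity of $\succeq_T$ on $\mathbb{S}$, where the two implications must be applied to one and the same pair $(G,u),(H,v)$ of graph-node incidences rather than to different pairs; everything else is bookkeeping.
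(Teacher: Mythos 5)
Your proposal is correct and follows essentially the same route as the paper: the paper's proof likewise dismisses reflexivity and anti-symmetry as immediate and spells out transitivity by chaining the two implications on one and the same pair of graph-node incidences $(G,u),(H,v)\in\mathfrak{I}$. You are somewhat more thorough in explicitly verifying that $\sim_T$ is an equivalence relation and that $\succeq_T$ descends to a well-defined relation on the quotient $\mathbb{S}/\!{\sim_T}$, steps the paper leaves implicit; this is a welcome addition but not a different argument.
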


\begin{proof}
Reflexivity and anti-symmetry immediately follow from the definitions of $\succeq_T$ and $\sim_T$. For showing transitivity, let $\mathcal{S}_1,\mathcal{S}_2,\mathcal{S}_3\in\mathbb{S}$ be local structure functions with $\mathcal{S}_1\succeq_T\mathcal{S}_2\succeq_T\mathcal{S}_3$ and $(G,u),(H,v)\in\mathfrak{I}$ be graph-node incidences with $\mathcal{S}_1(G,u)=\mathcal{S}_1(H,v)$. From $\mathcal{S}_1\succeq_T\mathcal{S}_2$, we obtain $\mathcal{S}_2(G,u)=\mathcal{S}_2(H,v)$. Therefore, $\mathcal{S}_2\succeq_T\mathcal{S}_3$ implies $\mathcal{S}_3(G,u)=\mathcal{S}_3(H,v)$ and hence $\mathcal{S}_1\succeq_T\mathcal{S}_3$, as required.
\end{proof}

\Cref{prop:poset-soa} below orders existing classical \LSAPEGED instantiations \wrt the topological information content of the employed local structure functions. \BRANCHUNI, \BRANCHFAST, \BRANCH, and \BP are topologically equivalent, because they use the same local structure function $\mathcal{S}$ and differ only \wrt the distance measure $d_\mathfrak{S}$ used on top of it. Among all existing instantiations, \SUBGRAPH's local structure function captures most topological information. However, this comes at the price of having to use a distance measure $d_\mathfrak{S}$ that is not polynomially computable. 

\begin{figure}
\centering
\begin{tikzpicture}[every node/.style={font=\footnotesize}, scale=1, transform shape]
\node (branchuni) at (-3,0) {$\mathcal{S}_\BRANCHUNI$};
\node (branchfast) at (-1,0) {$\mathcal{S}_\BRANCHFAST$};
\node (branch) at (1,0) {$\mathcal{S}_\BRANCH$};
\node (bp) at (3,0) {$\mathcal{S}_\BP$};
\node (star) at (0,.75) {$\mathcal{S}_\STAR$};
\node (walks2) at (-2,1.5) {$\mathcal{S}_{\WALKS,2}$};
\node (walks3) at (-2,2.25) {$\mathcal{S}_{\WALKS,3}$};
\node (subgraph2) at (0,2.25) {$\mathcal{S}_{\SUBGRAPH,2}$};
\node (subgraph3) at (0,3) {$\mathcal{S}_{\SUBGRAPH,3}$};

\node[anchor=west] (equiv) at (-5.45,0) {equivalent};
\node[anchor=west] (incomp3) at (-5.45,2.25) {incomparable};
\begin{scope}[on background layer]
\node[fit=(branchuni)(branchfast)(branch)(bp)(equiv),fill=black!10,inner sep=1pt, rounded corners] (box) {};
\node[fit=(subgraph2)(walks3)(incomp3),fill=black!10,inner sep=1pt, rounded corners] (box) {};
\end{scope}

\draw (branchuni) -- (star);
\draw (branchfast) -- (star);
\draw (branch) -- (star);
\draw (bp) -- (star);
\draw (star) -- (walks2);
\draw (walks2) -- (subgraph2);
\draw (walks2) -- (walks3);
\draw (walks3) -- (subgraph3);
\draw (subgraph2) -- (subgraph3);
\draw[densely dotted] (subgraph3) edge ($(subgraph3)+(0,.5)$);
\draw[densely dotted] (walks3) edge ($(walks3)+(0,.5)$);
\end{tikzpicture}
\caption{Hasse diagram induced by strict partial order $\succ_T$ for local structure functions employed by existing instantiations of \LSAPEGED.}\label{fig:poset-soa}
\end{figure}
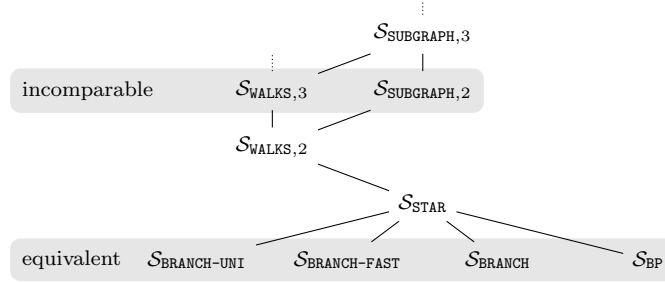

\begin{proposition}\label{prop:poset-soa}
The strict partial order $\succ_T$ orders local structure functions employed by existing classical \LSAPEGED instantiations as shown in the Hasse diagram in \Cref{fig:poset-soa}.
\end{proposition}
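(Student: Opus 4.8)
The plan is to traverse the Hasse diagram of \Cref{fig:poset-soa} edge by edge: for each covering relation (an edge, with $\mathcal{S}_y$ drawn above $\mathcal{S}_x$) I establish $\mathcal{S}_y\succeq_T\mathcal{S}_x$ together with $\mathcal{S}_x\nsucceq_T\mathcal{S}_y$, and I treat separately the single incomparable pair $\{\mathcal{S}_{\WALKS,3},\mathcal{S}_{\SUBGRAPH,2}\}$. Since $\succeq_T$ is transitive (\Cref{prop:partial-order}), this pins down $\succ_T$ entirely: every relation implied by an upward path is obtained by composing the established $\succeq_T$'s, and a witness proving $\mathcal{S}_x\nsucceq_T\mathcal{S}_y$ for a cover simultaneously proves $\mathcal{S}_x\nsucceq_T\mathcal{S}_z$ for every $\mathcal{S}_z$ above $\mathcal{S}_y$. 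The equivalences cost nothing: \BRANCHUNI, \BRANCHFAST, \BRANCH, and \BP use the \emph{same} local structure function\,---\,the branch rooted at a node\,---\,and differ only in the distance $d_\mathfrak{S}$ placed on top of it, so $\mathcal{S}_\BRANCHUNI=\mathcal{S}_\BRANCHFAST=\mathcal{S}_\BRANCH=\mathcal{S}_\BP$ as maps on $\mathfrak{I}$ and they lie in one $\sim_T$-class.

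For every ``$\succeq_T$'' I would exhibit a map $f$ with $\mathcal{S}_2=f\circ\mathcal{S}_1$; by \Cref{def:partial-order} this alone forces $\mathcal{S}_1\succeq_T\mathcal{S}_2$, since $\mathcal{S}_1(G,u)=\mathcal{S}_1(H,v)$ then yields $\mathcal{S}_2(G,u)=\mathcal{S}_2(H,v)$. The recurring patterns are: (i) the branch is the star with its terminal nodes deleted, so $\mathcal{S}_\STAR\succeq_T\mathcal{S}_\BP$; (ii) the walks of length at most $1$ rooted at $u$ record $u$'s label and every incident edge together with the label of its endpoint, hence determine the star, so $\mathcal{S}_{\WALKS,K}\succeq_T\mathcal{S}_\STAR$, in particular for $K=2$; (iii) restricting the walk bag to walks of length at most $K'$ gives $\mathcal{S}_{\WALKS,K}\succeq_T\mathcal{S}_{\WALKS,K'}$ for $K\geq K'$; (iv) a walk of length at most $K$ from $u$ never leaves the ball of radius $K$ around $u$, so a root- and label-preserving isomorphism of two such balls induces a label-preserving bijection of their walk bags, giving $\mathcal{S}_{\SUBGRAPH,K}\succeq_T\mathcal{S}_{\WALKS,K}$; and (v) the ball of radius $K'$ is the subgraph induced on the nodes at distance at most $K'$ inside the ball of radius $K$, so $\mathcal{S}_{\SUBGRAPH,K}\succeq_T\mathcal{S}_{\SUBGRAPH,K'}$ for $K\geq K'$. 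These arguments are uniform in the parameters; \Cref{fig:poset-soa} shows the instances $K,K'\in\{2,3\}$, the dotted edges standing for $\mathcal{S}_{\WALKS,K+1}\succ_T\mathcal{S}_{\WALKS,K}$ and $\mathcal{S}_{\SUBGRAPH,K+1}\succ_T\mathcal{S}_{\SUBGRAPH,K}$ at larger $K$.

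For strictness and for the incomparability I would give small witness incidences $(G,u),(H,v)\in\mathfrak{I}$ on which the weaker function agrees while the stronger one differs. \emph{Star versus branch.} Two graphs each consisting of a single edge whose non-root endpoints carry distinct node labels have equal branches but different stars. \emph{\WALKS versus star.} For the single unlabeled edge $\{u,w\}$ and the unlabeled path on three nodes with $v$ an endpoint, $u$ and $v$ have equal stars but $1$ versus $2$ walks of length $2$. \emph{Consecutive walk lengths.} A node $u$ of $C_3$ and the center $v$ of $P_5$ both have $1$, $2$, and $4$ walks of lengths $0$, $1$, $2$, but $8$ versus $6$ walks of length $3$. \emph{\SUBGRAPH versus \WALKS at equal parameter.} $u\in C_3$ and $v\in C_4$ have equal walk bags up to length $2$ but radius-$2$ balls $C_3\ncong C_4$; $u\in C_3$ and $v\in C_6$ have equal walk bags up to length $3$ but radius-$3$ balls $C_3\ncong C_6$. \emph{Consecutive \SUBGRAPH radii.} A leaf $u$ of $P_3$ and a leaf $v$ of $P_4$ have isomorphic rooted radius-$2$ balls but non-isomorphic rooted radius-$3$ balls. \emph{Incomparability of $\mathcal{S}_{\WALKS,3}$ and $\mathcal{S}_{\SUBGRAPH,2}$.} The incidences $u\in C_3$, $v\in C_6$ have equal walk bags up to length $3$ but radius-$2$ balls $C_3$ and $P_5$, while the leaves of $P_3$ and $P_4$ have equal rooted radius-$2$ balls but $2$ versus $3$ walks of length $3$.

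The delicate point is not a single computation but robustness against the precise definitions of $\mathcal{S}_{\WALKS,K}$ and $\mathcal{S}_{\SUBGRAPH,K}$ from \Cref{sec:paradigm:classical}: the walk bag must collect walks of length \emph{at most} $K$ rather than of length exactly $K$\,---\,this is what makes the \WALKS family a $\succeq_T$-chain and what lets $\mathcal{S}_{\WALKS,K}$ recover the star even at an isolated root\,---\,and one must keep track of whether a walk is recorded only up to its label sequence or also records whether it returns to its root; in the latter case the cyclic witnesses above would have to be replaced by bipartite cycles (\eg $C_4$ and $C_8$ in place of $C_3$ and $C_6$) so that closed-walk counts also coincide. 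Once the definitions are fixed, the $\succeq_T$ directions reduce to the routine ``$\mathcal{S}_2$ is a function of $\mathcal{S}_1$'' checks above; the only witness that genuinely needs care is the consecutive-walk-length separation, because in cycles $C_n$ the walk counts are insensitive to $n$ over a range of walk lengths, which forces the use of a degree-matched tree fragment such as the center of $P_5$ rather than another cycle.
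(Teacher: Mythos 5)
Your proposal is correct and follows essentially the same route as the paper's proof, which establishes the branch equivalence class and the chain $\mathcal{S}_{\WALKS,2}\succ_T\mathcal{S}_\STAR\succ_T\mathcal{S}_\BP$ by the same observations and then simply asserts that the remaining relations ``directly follow from the definitions.'' The added value of your write-up is that it makes the assertion checkable: the factorization criterion ($\mathcal{S}_2=f\circ\mathcal{S}_1$ implies $\mathcal{S}_1\succeq_T\mathcal{S}_2$) and the explicit small-graph witnesses for each strictness and for the incomparability of $\mathcal{S}_{\SUBGRAPH,L}$ with $\mathcal{S}_{\WALKS,L+1}$ are exactly the details the paper omits, and your caveat about whether walk bags are indexed by exact or bounded length (and whether closed walks are distinguishable) is a genuine sensitivity that the paper's one-line dismissal glosses over.
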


\begin{proof}
\BRANCHUNI, \BRANCHFAST, \BRANCH, and \BP all use the same local structure function\,---\,namely, branches rooted at the nodes. This implies $\mathcal{S}_\BRANCHUNI\sim_T\mathcal{S}_\BRANCHFAST\sim_T\mathcal{S}_\BRANCH\sim_T\mathcal{S}_\BP$. The star structures employed by $\mathcal{S}_\STAR$ can be viewed as branches that additionally contain the terminal nodes of the edges incident with the root or, alternatively, as the bag of all walks of length $1$ starting at the root. This yields $\mathcal{S}_{\WALKS,2}\succ_T\mathcal{S}_\STAR\succ_T\mathcal{S}_\BRANCHUNI,\mathcal{S}_\BRANCHFAST,\mathcal{S}_\BRANCH,\mathcal{S}_\BP$. The relations $\mathcal{S}_{\SUBGRAPH,L}\succ_T\mathcal{S}_{\WALKS,L}$, $\mathcal{S}_{\WALKS,L+1}\succ_T\mathcal{S}_{\WALKS,L}$,  as well as the fact that $\mathcal{S}_{\SUBGRAPH,L}$ and $\mathcal{S}_{\WALKS,L+1}$ are incomparable directly follow from the definitions of the respective local structure functions.
\end{proof}

\subsection{Machine learning based instantiations}\label{sec:paradigm:ml}

\subsubsection{General strategy}\label{sec:paradigm:ml:scheme}

Instead of using local structures, \LSAPE instances can be constructed
with the help of feature vectors associated to good and bad node
assignments. This strategy is inspired by the existing algorithms
\PREDICT \cite{riesen:2016aa} and \NGM \cite{cortes:2018aa}. However,
as detailed in \Cref{sec:heuristics:ngm-predict} below, both \PREDICT and \NGM fall short of completely
instantiating it.

\begin{definition}[$\epsilon$-Optimal Node Assignments]\label{def:good-bad}
  A node assignment $(G,H,u,v)\in\mathfrak{A}$ is called $\epsilon$-optimal if and
  only if it is contained in an $\epsilon$-optimal node map, \ie, if there is a
  node map $\pi\in\Pi(G,H)$ with $c(\pi)\leq\GED(G,H)\cdot(1+\epsilon)$ and
  $(u,v)\in\pi$. $\mathfrak{A}^\star_\epsilon$ is the set of all $\epsilon$-optimal node assignments. Node assignments contained in
  $\mathfrak{A}\setminus\mathfrak{A}^\star_\epsilon$ are called $\epsilon$-bad.
\end{definition}

If machine learning techniques are used for populating \C, feature vectors $\mathcal{F}:\mathfrak{A}\to\mathbb{R}^d$ for
the node assignments have to be defined and a function
$p^\star:\mathbb{R}^d\to[0,1]$ has to be learned, which maps feature
vectors $\mathbf{x}\in\mathcal{F}[\mathfrak{A}^\star_\epsilon]$ to large
and feature vectors
$\mathbf{x}\in\mathcal{F}[\mathfrak{A}\setminus\mathfrak{A}^\star_\epsilon]$ to
small values. Informally, $p^\star(\mathbf{x})$ can be viewed as an
estimate of the probability that the feature vector
$\mathbf{x}$ is associated to an $\epsilon$-optimal node
map. Once $p^\star$ has been learned, \C is defined as
$c_{i,k}\defined 1-p^\star(\mathcal{F}(G,H,u_i,v_k))$, 
$c_{i,|V^H|+1} \defined 1-p^\star(\mathcal{F}(G,H,u_i,\epsilon))$, and
$c_{|V^G|+1,k} \defined 1-p^\star(\mathcal{F}(G,H,\epsilon,v_k))$, 
for all $(i,k)\in[|V^G|]\times[|V^H|]$.

\subsubsection{Choice of machine learning technique}\label{sec:paradigm:ml:technique}

For learning $p^\star$, several strategies can be adopted. Given a set $\mathcal{G}$ of training graphs, one can mimic \PREDICT and compute $\epsilon$-optimal node maps $\pi_{G,H}$ for the training graphs. These node maps can be used to generate training data 
$\mathcal{T}\defined \{(\mathcal{F}(G,H,u,v),\delta_{(u,v)\in\pi_{G,H}})\mid(G,H,u,v)\in\mathfrak{A}[\mathcal{G}]\}$,
where $\delta_{\mathtt{true|false}}$ maps $\mathtt{true}$ to $1$ and $\mathtt{false}$ to $0$, and $\mathfrak{A}[\mathcal{G}]$ is the restriction of $\mathfrak{A}$ to the graphs contained in $\mathcal{G}$. Finally, a kernelized \SVC with probability estimates \cite{lin:2007aa} can be trained on $\mathcal{T}$. Alternatively, one can proceed like \NGM, \ie, use $\mathcal{T}$ to train a fully connected feedforward \DNN with output from $[0,1]$, and define  $p^\star$ as the output of the \DNN.

The drawback of these approaches is that some feature vectors
are incorrectly labeled as $\epsilon$-bad if there is
more than one $\epsilon$-optimal node map. Assume that, for training
graphs $G$ and $H$, there are two $\epsilon$-optimal node maps $\pi_{G,H}$ and
$\pi^\prime_{G,H}$ and that the algorithm used for generating
$\mathcal{T}$ computes $\pi_{G,H}$. Let $(G,H,u,v)$ be a node
assignment such that $(u,v)$ is contained in
$\pi^\prime_{G,H}\setminus\pi_{G,H}$. According to
\Cref{def:good-bad}, $(G,H,u,v)$ is an $\epsilon$-optimal node assignment, but in
$\mathcal{T}$, its feature vector
$\mathbf{x}\defined\mathcal{F}(G,H,u,v)$ is labeled as $\epsilon$-bad.

A straightforward but computationally infeasible way to tackle this
problem is to compute all $\epsilon$-optimal node maps for the training
graphs. Instead, we suggest to train a one class support vector
machine (\SVM) \cite{scholkopf:2001aa} with RBF kernel to estimate the
support of the feature vectors associated to good node maps. This has the advantage that, given a set
$\mathcal{G}$ of training graphs and initially computed $\epsilon$-optimal node
maps $\pi_{G,H}$ for all $G,H\in\mathcal{G}$, we can use training data
$\mathcal{T}^\star\defined \{\mathcal{F}(G,H,u,v)\mid(G,H,u,v)\in\mathfrak{A}[\mathcal{G}]\land(u,v)\in\pi_{G,H}\}$, 
which contains only feature vectors associated to $\epsilon$-optimal node assignments and is hence correct even if there are multiple $\epsilon$-optimal node maps.

For the definition of $p^\star$, recall that a \SVM with RBF kernel learns a multivariate Gaussian mixture model
$\mathcal{M}(\boldsymbol{\alpha},\gamma)\defined\sum_{i=1}^{|\mathcal{T}^\star|}(\mathbf{1}^\mathsf{T}\boldsymbol{\alpha})^{-1}\alpha_i\mathcal{N}(\mathbf{0},(2\gamma)^{-1}\mathbf{I})$
for the feature vectors $\mathcal{F}[\mathfrak{A}^\star_\epsilon]$ associated to $\epsilon$-optimal node assignments, where $\alpha_i$ is the dual variable associated to the training vector $\mathbf{x}^i\in\mathcal{T}^\star$ and $\gamma>0$ is a tuning parameter. We can hence simply define
$p^\star(\mathbf{x})$ as the likelihood of the feature vector
$\mathbf{x}$ under the model $\mathcal{M}(\boldsymbol{\alpha},\gamma)$
learned by the \SVM, \ie, set
$p^\star(\mathbf{x})\defined \left((\gamma/\pi)^{d/2}/\mathbf{1}^\mathsf{T}\boldsymbol{\alpha}\right)\cdot\left(\sum_{i=1}^{|\mathcal{T}^\star|}\alpha_i\exp{(-\gamma\lVert\mathbf{x}^i-\mathbf{x}\rVert^2_2)}\right)$.

\section{Rings as Local Structures}\label{sec:rings}

In this section, we introduce rings of size $L$ as a new kind of local structures. Subsequently, we show how to construct them. As mentioned above, rings are similar to the subgraph and walks structures used, respectively, by \SUBGRAPH and \WALKS in that they capture more topological information than the local structures used by the baseline approaches \BP, \STAR, \BRANCHUNI, and \BRANCH. They are hence designed to be used by instantiations of \LSAPEGED that aim at computing tight upper bounds on datasets where most information resides in the graphs' topologies. The advantage of rings \wrt subgraphs is that rings can be processed in polynomial time, while comparing local subgraphs is polynomial only on graphs with constantly bounded maximum degrees. In comparison to walks, rings have two advantages: Firstly, rings model general edit costs, while walks are designed for constant edit costs. Secondly, walks often contain redundant information, because it can happen that a walk visits nodes and edges multiple times. Since rings consist of disjoint layers, they do not suffer from this problem.

\subsection{Rings: definition and properties}\label{sec:rings:definition}

We define the rings rooted at the nodes of a graph $G$ as $L$-sized sequences of layers $\mathcal{L}^G=(\NODES^G,\OUTER^G,\INNER^G)$, where $\NODES^G\subseteq V^G$ is a subset of the nodes, and $\OUTER^G,\INNER^G\subseteq E^G$ are subsets of the edges of $G$. Formally, the space of all $L$-sized rings for graphs from a domain $\mathfrak{G}$ is defined as $\mathfrak{R}_L:=\{(\mathcal{L}_l)^{L-1}_{l=0}\mid\mathcal{L}_l\in\bigcup_{G\in\mathfrak{G}}\mathfrak{L}(G)\}$, where $\mathfrak{L}(G)\defined\mathcal{P}(V^G)\times\mathcal{P}(E^G)\times\mathcal{P}(E^G)$. Next, we specify a function $\mathcal{R}_L:\mathfrak{I}\to\mathfrak{R}_L$ which maps a graph-node incidence $(G,u)$ to a ring of size $L$. For this, we need some terminology: The distance $d_V^G(u,u^\prime)$ between two nodes $u,u^\prime\in V^G$ is defined as the number of edges on a shortest path connecting them or as $\infty$ if they are in different connected components of $G$. The eccentricitiy of a node $u\in V^G$ and the diameter of a graph $G$ are defined as $e^G_V(u):=\max_{u^\prime\in V^G}d_V^G(u,u^\prime)$ and $\diam(G):=\max_{u\in V^G}e^G_V(u)$, respectively.

\begin{definition}[Ring]\label{def:ring}
Given a constant $L\in\N_{>0}$, the function $\mathcal{R}_L:\mathfrak{I}\to\mathfrak{R}_L$ maps a graph-node incidence $(G,u)$ to the ring $\RINGLS(G,u):=(\LAYER^G_l(u))^{L-1}_{l=0}$ rooted at $u$ in $G$ (\Cref{fig:ring}). For the dummy node $\epsilon$, we define $\RINGLS(G,\epsilon):=((\emptyset,\emptyset,\emptyset)_l)^{L-1}_{l=0}$. For all other nodes $u$, $\LAYER^G_l(u):=(\NODES^G_l(u),\OUTER^G_l(u),\INNER^G_l(u))$ denotes the $l$\textsuperscript{th} layer rooted at $u$ in $G$, where:
\begin{enumerate}
\item $\NODES^G_l(u):=\{u^\prime\in V^G\mid d_V^G(u,u^\prime)=l\}$ is the set of nodes at distance $l$ from $u$. 
\item $\INNER^G_l(u):=E^G\cap\left(\NODES^G_l(u)\times \NODES^G_l(u)\right)$
is the set of inner edges connecting two nodes in the $l$\textsuperscript{th} layer. 
\item $\OUTER^G_l(u):=E^G\cap\left(\NODES^G_l(u)\times \NODES^G_{l+1}(u)\right)$
is the set of outer edges connecting a node in the $l$\textsuperscript{th} layer to a node in the $(l+1)$\textsuperscript{th} layer.
\end{enumerate}
\end{definition}

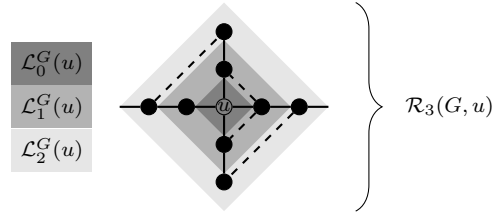
\begin{figure}[t]
\centering
\tikzset{n/.style={circle,minimum width=6pt,inner sep=0pt}}
\tikzset{e/.style={thick}}
\begin{tikzpicture}[every node/.style={font=\footnotesize}, scale=1, transform shape]
\node[n,draw=none] (u3-1) at (1.5,0) {};
\node[n,draw=none] (u3-2) at (-1.5,0) {};
\node[n,draw=none] (u3-3) at (0,1.5) {};
\node[n,draw=none] (u3-4) at (0,-1.5) {};
\node[n,draw] (u) at (0,0) {$u$};
\node[n,draw,fill=black] (u1-1) at (.5,0) {};
\node[n,draw,fill=black] (u1-2) at (-.5,0) {};
\node[n,draw,fill=black] (u1-3) at (0,.5) {};
\node[n,draw,fill=black] (u1-4) at (0,-.5) {};
\node[n,draw,fill=black] (u2-1) at (1,0) {};
\node[n,draw,fill=black] (u2-2) at (-1,0) {};
\node[n,draw,fill=black] (u2-3) at (0,1) {};
\node[n,draw,fill=black] (u2-4) at (0,-1) {};
\foreach \i in {1,...,4} {
\draw[e] (u) -- (u1-\i);
\draw[e] (u1-\i) -- (u2-\i);
}
\draw[e] (u2-1) -- (u3-1);
\draw[e] (u2-2) -- (u3-2);
\draw[e,dashed] (u1-1) -- (u1-4);
\draw[e,dashed] (u1-1) -- (u1-3);
\draw[e,dashed] (u2-2) -- (u2-3);
\draw[e,dashed] (u2-1) -- (u2-4);

\begin{scope}[on background layer]
\filldraw[fill=black!10,draw=none] (u3-1.west) -- (u3-3.south) -- (u3-2.east) -- (u3-4.north) -- (u3-1.west);
\filldraw[fill=black!30,draw=none] (u2-1.west) -- (u2-3.south) -- (u2-2.east) -- (u2-4.north) -- (u2-1.west);
\filldraw[fill=black!50,draw=none] (u1-1.west) -- (u1-3.south) -- (u1-2.east) -- (u1-4.north) -- (u1-1.west);
\end{scope}
\node[fill=black!30,anchor=east] (L1) at (-1.75,0) {$\LAYER^G_1(u)$};
\node[fill=black!10,anchor=north east] (L2) at (L1.south east) {$\LAYER^G_2(u)$};
\node[fill=black!50,anchor=south east] (L0) at (L1.north east) {$\LAYER^G_0(u)$};
\gettikzxy{(u3-3.south)}{\midx}{\north}
\gettikzxy{(u3-4.north)}{\midx}{\south}
\draw [decorate,decoration={brace,amplitude=10pt},yshift=0pt]
(1.75,\north) -- (1.75,\south) node [black,midway,xshift=1.25cm] {$\RINGLS[3](G,u)$};

\end{tikzpicture}
\caption{Visualization of \Cref{def:ring}. Inner edges are dashed, outer edges are solid. Layers are displayed in different shades of grey.}\label{fig:ring}
\end{figure}

It is easy to see that the ring $\RINGLS[1](G,u)$ of a node $u\in V^G$
corresponds to the branch structures used by \BP, \BRANCH, \BRANCHFAST, and \BRANCHUNI. Further
properties of rings and layers are summarized in \Cref{rem:rings}.

\begin{remark}[Properties of Rings]\label{rem:rings}
  Let $u\in V^G$ be a node and
  $\RINGLS(G,u)=((\NODES^G_l(u),\OUTER^G_l(u),\INNER^G_l(u))_l)^L_{l=0}$
  be the ring of size $L$ rooted at $u$.  Then the following
  statements follow from the involved definitions:
\begin{enumerate}
\item The node set $\NODES^G_l(u)$ is empty if and only if $l>e^G_V(u)$, the edge set $\INNER^G_l(u)$ is empty if $l>e^G_V(u)$, and the edge set $\OUTER^G_l(u)$ is empty if and only if $l>e^G_V(u)-1$.
\item All node sets $\NODES^G_l(u)$ and all edge sets $\OUTER^G_l(u)$ and $\INNER^G_l(u)$ are disjoint.
\item The equalities $\bigcup^{L-1}_{l=0}\NODES^G_l(u)=V^G$ and $\bigcup^{L-1}_{l=0}(\OUTER^G_l(u)\cup\INNER^G_l(u))=E^G$ hold for all $u\in V^G$ if and only if $L>\diam(G)$.
\end{enumerate}
\end{remark}

\Cref{prop:poset-soa-rings} and \Cref{fig:poset-soa-rings} show how rings relate to existing local structures in terms of topological information content. We see that rings of size at least $2$ are strictly topologically more informative than the local structures employed by the baseline instantiations \BRANCHUNI, \BRANCHFAST, \BRANCH, \BP, and \STAR. Moreover, for fixed sizes, rings are incomparable to the bags of walks used by \WALKS and strictly topologically less informative than the rooted subgraphs used by \SUBGRAPH. Recall, however, that rooted subgraphs cannot be compared in polynomial time and that bags of walks only model constant edit costs. Rings are hence the only generically applicable and polynomially comparable local structures that provably capture more topological information than the baseline structures. 

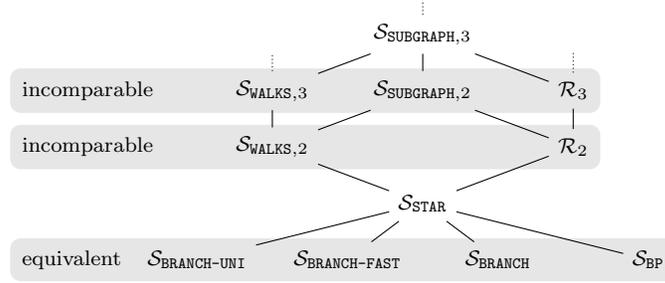
\begin{figure}
\centering
\begin{tikzpicture}[every node/.style={font=\footnotesize}, scale=1, transform shape]
\node (branchuni) at (-3,0) {$\mathcal{S}_\BRANCHUNI$};
\node (branchfast) at (-1,0) {$\mathcal{S}_\BRANCHFAST$};
\node (branch) at (1,0) {$\mathcal{S}_\BRANCH$};
\node (bp) at (3,0) {$\mathcal{S}_\BP$};
\node (star) at (0,.75) {$\mathcal{S}_\STAR$};
\node (walks2) at (-2,1.5) {$\mathcal{S}_{\WALKS,2}$};
\node (walks3) at (-2,2.25) {$\mathcal{S}_{\WALKS,3}$};
\node (ring2) at (2,1.5) {$\mathcal{R}_{2}$};
\node (ring3) at (2,2.25) {$\mathcal{R}_{3}$};
\node (subgraph2) at (0,2.25) {$\mathcal{S}_{\SUBGRAPH,2}$};
\node (subgraph3) at (0,3) {$\mathcal{S}_{\SUBGRAPH,3}$};

\node[anchor=west] (equiv) at (-5.45,0) {equivalent};
\node[anchor=west] (incomp2) at (-5.45,1.5) {incomparable};
\node[anchor=west] (incomp3) at (-5.45,2.25) {incomparable};
\begin{scope}[on background layer]
\node[fit=(branchuni)(branchfast)(branch)(bp)(equiv),fill=black!10,inner sep=1pt, rounded corners] (box) {};
\node[fit=(ring2)(walks2)(incomp2),fill=black!10,inner sep=1pt, rounded corners] (box) {};
\node[fit=(ring3)(walks3)(incomp3)(subgraph2),fill=black!10,inner sep=1pt, rounded corners] (box) {};
\end{scope}

\draw (branchuni) -- (star);
\draw (branchfast) -- (star);
\draw (branch) -- (star);
\draw (bp) -- (star);
\draw (star) -- (walks2);
\draw (star) -- (ring2);
\draw (ring2) -- (subgraph2);
\draw (ring3) -- (subgraph3);
\draw (ring2) -- (ring3);
\draw (walks2) -- (subgraph2);
\draw (walks2) -- (walks3);
\draw (walks3) -- (subgraph3);
\draw (subgraph2) -- (subgraph3);
\draw[densely dotted] (subgraph3) edge ($(subgraph3)+(0,.5)$);
\draw[densely dotted] (walks3) edge ($(walks3)+(0,.5)$);
\draw[densely dotted] (ring3) edge ($(ring3)+(0,.5)$);
\end{tikzpicture}
\caption{Hasse diagram induced by strict partial order $\succ_T$ for rings and local structure functions employed by existing instantiations of \LSAPEGED.}\label{fig:poset-soa-rings}
\end{figure}

\begin{proposition}\label{prop:poset-soa-rings}
The strict partial order $\succ_T$ orders the ring functions $\mathcal{R}_L$ and local structure functions employed by existing classical \LSAPEGED instantiations as shown in the Hasse diagram in \Cref{fig:poset-soa-rings}.
\end{proposition}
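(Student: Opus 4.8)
The plan is to argue exactly as in the proof of \Cref{prop:poset-soa}: to establish $\mathcal{S}_1\succeq_T\mathcal{S}_2$ it suffices to exhibit a ``forgetful'' map $\phi$ with $\mathcal{S}_2=\phi\circ\mathcal{S}_1$, since then $\mathcal{S}_1(G,u)=\mathcal{S}_1(H,v)$ trivially forces $\mathcal{S}_2(G,u)=\mathcal{S}_2(H,v)$; and to establish $\mathcal{S}_2\nsucceq_T\mathcal{S}_1$ it suffices to exhibit two graph-node incidences $(G,u),(H,v)$ with $\mathcal{S}_2(G,u)=\mathcal{S}_2(H,v)$ but $\mathcal{S}_1(G,u)\neq\mathcal{S}_1(H,v)$. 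Every relation not involving a ring function is inherited verbatim from \Cref{prop:poset-soa}; together with transitivity of $\succeq_T$ (\Cref{prop:partial-order}) it therefore remains to prove the three new cover relations $\mathcal{R}_{L+1}\succ_T\mathcal{R}_L$, $\mathcal{R}_2\succ_T\mathcal{S}_\STAR$, and $\mathcal{S}_{\SUBGRAPH,L}\succ_T\mathcal{R}_L$ (for all relevant $L$), the incomparabilities between the $\mathcal{R}_L$'s and the $\mathcal{S}_{\WALKS,L'}$'s drawn in \Cref{fig:poset-soa-rings}, and the absence of any further comparability.

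For the $\succeq_T$-parts I would build the forgetful maps explicitly. First, $\mathcal{R}_{L+1}\succeq_T\mathcal{R}_L$, because by \Cref{def:ring} the size-$L$ ring $(\LAYER^G_l(u))^{L-1}_{l=0}$ is the length-$L$ prefix of the size-$(L{+}1)$ ring, so $\phi$ is truncation. Second, $\mathcal{R}_2\succeq_T\mathcal{S}_\STAR$: the zeroth layer of $\mathcal{R}_2(G,u)$ carries the root together with $\OUTER^G_0(u)$, which is exactly the set of edges incident with $u$, and the first layer carries $\NODES^G_1(u)$, the set of their terminal nodes; these data are precisely the star rooted at $u$, so $\phi$ is the projection onto those pieces. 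Third, $\mathcal{S}_{\SUBGRAPH,L}\succeq_T\mathcal{R}_L$: a breadth-first search from $u$ inside the rooted subgraph recovers the distance-$l$ node sets for $l<L$ together with all edges inside a distance class and between consecutive distance classes, hence the sets $\INNER^G_l(u)$ and $\OUTER^G_l(u)$ with $l<L$, i.e.\ the whole size-$L$ ring. Combining these three facts with \Cref{prop:poset-soa} and transitivity yields every upward edge of the Hasse diagram in \Cref{fig:poset-soa-rings}.

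It remains to show strictness and the incomparabilities, which I would do with a small stock of witnessing incidences. Two paths rooted at an endpoint whose lengths differ by one have the same size-$L$ ring but different size-$(L{+}1)$ rings, giving $\mathcal{R}_L\nsucceq_T\mathcal{R}_{L+1}$; a star rooted at its center versus a triangle, rooted at a vertex, have the same star but rings that already differ at layer $1$ (empty versus nonempty inner-edge set), giving $\mathcal{S}_\STAR\nsucceq_T\mathcal{R}_2$; and for $\mathcal{R}_L\nsucceq_T\mathcal{S}_{\SUBGRAPH,L}$ one takes two incidences whose layer-by-layer data coincide but whose rooted subgraphs are non-isomorphic (for instance, a root joined to four neighbors that form either a perfect matching or a path, which have the same size-$L$ ring for every $L$ but non-isomorphic balls). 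For the rings-versus-walks incomparabilities I would use, in one direction, a root with two pendant edges whose edge-label/neighbor-label pairing is swapped between $G$ and $H$: the rings are identical at every size (the layerwise label multisets agree) while the length-$\leq1$ walk bags differ, so $\mathcal{R}_L\nsucceq_T\mathcal{S}_{\WALKS,L'}$; and in the other direction an unlabelled triangle versus an unlabelled $4$-cycle, both rooted at a vertex: the walk counts of every length $\leq2$ agree whereas the rings disagree already at layer $1$, so $\mathcal{S}_{\WALKS,L'}\nsucceq_T\mathcal{R}_L$. Since a counterexample for the larger indices also serves the smaller ones (truncate the rings, restrict the walks), only a bounded number of base cases need be checked, and one finishes by noting that transitivity together with the established covers leaves no pair comparable that \Cref{fig:poset-soa-rings} shows incomparable. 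The part I expect to be the real obstacle is not any single example but the bookkeeping: one must fix, consistently with the definitions used throughout \Cref{sec:rings}, exactly which data each structure retains for the purpose of the comparison $=$ — in particular that rings and the \STAR/\WALKS structures are compared through their node- and edge-label multisets while \SUBGRAPH retains the full incidence structure — since it is precisely this that makes rings strictly more informative than \STAR yet incomparable with bags of walks, and one must verify that the size/radius conventions of rings, \WALKS, and \SUBGRAPH line up with the covers drawn in the figure.
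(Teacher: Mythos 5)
Your proposal follows essentially the same route as the paper's proof, which is a much terser version of exactly this argument: it establishes $\RINGLS[2]\succ_T\mathcal{S}_\STAR$ by observing that size-$2$ rings are stars plus the inner and outer edges of layer $1$, derives the walks/rings incomparabilities from the facts that bags of walks forget distances from the root while rings forget how nodes in consecutive layers are wired together, and dismisses $\RINGLS[L+1]\succ_T\RINGLS$ and $\mathcal{S}_{\SUBGRAPH,L}\succeq_T\RINGLS$ as immediate from the definitions; your ``forgetful map plus witness'' scheme makes all of this explicit, and your closing caveat about fixing the notion of equality of local structures is a real issue that the paper silently elides. One concrete slip: in your witness for $\RINGLS\nsucceq_T\mathcal{S}_{\SUBGRAPH,L}$, a perfect matching on the four neighbours has two edges while a path on all four has three, so the sets $\INNER^G_1(u)$ would already differ in size and the rings would not coincide; replace the path by one on only three of the neighbours (\eg compare inner edge sets $\{ab,cd\}$ and $\{ab,bc\}$) and the example works as intended. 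Similarly, your triangle-versus-$C_4$ witness certifies $\mathcal{S}_{\WALKS,2}\nsucceq_T\RINGLS[2]$ but not the corresponding claim at level $3$ (closed walks of length $3$ distinguish the two), and your remark that counterexamples for larger indices restrict to smaller ones runs in the wrong direction here\,---\,though the paper's own one-line justification is no more rigorous on this point.
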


\begin{proof}
Since rings of size $2$ are stars plus inner and outer edges in the layer with index $1$, we have $\RINGLS[2]\succ_T\mathcal{S}_\STAR$. $\mathcal{S}_{\WALKS,L}\nsucceq_T\RINGLS$ is implied by the fact that bags of walks are oblivious as to how far the nodes on the walks are away from the root. Conversely, rings do not model how nodes in different layers are connected to each others, which implies $\RINGLS\nsucceq_T\mathcal{S}_{\WALKS,L}$. $\RINGLS[L+1]\succ_T\RINGLS$, $\mathcal{S}_{\SUBGRAPH,L}\succeq_T\RINGLS$, and the fact that $\RINGLS[L+1]$ and $\mathcal{S}_{\SUBGRAPH,L}$ are incomparable directly follow from the involved definitions.
\end{proof}

\subsection{Rings: construction}\label{sec:rings:construction}

\Cref{fig:bfs} shows how to construct a ring $\RINGLS(G,u)$ via
breadth-first search. The algorithm maintains the level $l$ of the
currently processed layer along with the layer's node and edge sets
$\NODES$, $\OUTER$, and $\INNER$, a vector $\mathtt{d}$ that stores
for each node $u^\prime\in V^G$ the distance to the root $u$, flags
$\mathtt{discovered}[e]$ that indicate if the edge $e\in E^G$ has
already been discovered by the algorithm, and a FIFO queue
$\mathtt{open}$ which is initialized with the root $u$. Throughout the
algorithm, $\mathtt{d}[u^\prime]=d^G_V(u,u^\prime)$ holds for all
nodes $u^\prime$ which have already been added to $\mathtt{open}$,
while newly discovered nodes $u^{\prime\prime}$ have
$\mathtt{d}[u^{\prime\prime}]=\infty$.

\begin{figure}[t]
\centering
\removelatexerror
\input{img/bfs}
\caption{Construction of rings via breadth-first search.}\label{fig:bfs}
\end{figure}

If a node $u^\prime$ is popped from $\mathtt{open}$, we check if its
distance is larger than the level $l$ of the current layer. If this is
case, we store the current layer, increment $l$, and clear the node
and edge sets $\NODES$, $\OUTER$, and $\INNER$. Next, we add the node
$u^\prime$ to $\NODES$ and iterate through its
undiscovered incident edges $(u^\prime,u^{\prime\prime})$. We mark
$(u^\prime,u^{\prime\prime})$ as discovered and push the node
$u^{\prime\prime}$ to $\mathtt{open}$ if it has not been discovered
yet and its distance to $u$ is less than $L$. If this
distance equals $l$, 
$(u^\prime,u^{\prime\prime})$ is added to the inner edges $\INNER$;
otherwise, it is added to the outer edges $\OUTER$. Once
$\mathtt{open}$ is empty, the last layer is stored and the complete
ring is returned. Since nodes and edges are processed at most once,
the algorithm runs in $O(|V^G|+|E^G|)$ time. The runtime complexity is hence independent of the parameter $L$.
How $L$ affects ring based instantiations of \LSAPEGED and how it should be chosen is detailed in \Cref{sec:heuristics:parameters}.

\section{Instantiations of \LSAPEGED Based on Rings and Machine Learning}\label{sec:heuristics}

In this section, we present two new heuristics for ring based
transformations to \LSAPE: \RING (\Cref{sec:heuristics:ring}) uses the classical transformation
strategy; \RINGML (\Cref{sec:heuristics:ringml}) employs
our new machine learning based approach. We also discuss how the existing machine learning based approaches \NGM and \PREDICT relate to \LSAPEGED (\Cref{sec:heuristics:ngm-predict}).

\subsection{\RING: a classical instantiation}\label{sec:heuristics:ring}

\RING is a classical instantiation of the paradigm \LSAPEGED which
uses rings of size $L$ as local structures. Therefore, what remains to
be done is to define a distance measure
$d_{\mathfrak{R}_L}:\mathfrak{R}_L\times\mathfrak{R}_L\to\mathbb{R}_{\geq0}$
for the rings. We will define such a distance measure in a bottom-up
fashion: Ring distances are defined in terms of layer distances,
which, in turn, are defined in terms of node and edge set distances.

Assume that, for all pairs of graphs
$(G,H)\in\mathfrak{G}\times\mathfrak{G}$, we have access to measures
$d^{G,H}_{\mathcal{P}(V)}:\mathcal{P}(V^G)\times\mathcal{P}(V^H)\to\mathbb{R}_{\geq0}$
and
$d^{G,H}_{\mathcal{P}(E)}:\mathcal{P}(E^G)\times\mathcal{P}(E^H)\to\mathbb{R}_{\geq0}$
that compute distances between subsets of nodes and edges. Then we can define a layer distance measure
$d^{G,H}_{\mathfrak{L}}:\mathfrak{L}(G)\times\mathfrak{L}(H)\to\mathbb{R}_{\geq0}$
as
\begin{IEEEeqnarray*}{rCl}
d^{G,H}_{\mathfrak{L}}(\mathcal{L}^G,\mathcal{L}^H)&\defined&\frac{\alpha_0d^{G,H}_{\mathcal{P}(V)}(\NODES^G,\NODES^H)}{\max\{|\NODES^G|,|\NODES^H|,1\}}
+\frac{\alpha_1d^{G,H}_{\mathcal{P}(E)}(\INNER^G,\INNER^H)}{\max\{|\INNER^G|,|\INNER^H|,1\}}\\
&&+\>\frac{\alpha_2d^{G,H}_{\mathcal{P}(E)}(\OUTER^G,\OUTER^H)}{\max\{|\OUTER^G|,|\OUTER^H|,1\}}\text{,}
\end{IEEEeqnarray*}
where $\boldsymbol{\alpha}\in\boldsymbol{\Delta}^2$ is a simplex
vector of weights associated to the distances between nodes, inner
edges, and outer edges. We normalize by the sizes of the involved node
and edge sets in order not to overrepresent large layers. Using the
layer distances and a simplex weight vector
$\boldsymbol{\lambda}\in\boldsymbol{\Delta}^{L-1}$, we define the ring distance measure as follows:
\begin{IEEEeqnarray*}{c}
d_{\mathfrak{R}_L}((\LAYER^G_l)^{L-1}_{l=0},(\LAYER^H_l)^{L-1}_{l=0})\defined\sum^{L-1}_{l=0}\lambda_ld^{G,H}_{\mathfrak{L}}(\mathcal{L}^G_l,\mathcal{L}^H_l)
\end{IEEEeqnarray*}

Next, we define node and edge set distances $d^{G,H}_{\mathcal{P}(V)}$ and $d^{G,H}_{\mathcal{P}(E)}$. To obtain tight upper bounds for \GED, they should be defined such that $d_{\mathfrak{R}_L}(\RINGLS(G,u),\RINGLS(H,v))$ is small just in case the node assignment $(G,H,u,v)$ induces a small edit cost. We suggest two strategies that meet this desideratum. 

\subsubsection{\LSAPE based node and edge set distances}\label{sec:heuristics:ring:lsape}

The first approach uses the edit cost functions $c_V$ and $c_E$ to populate \LSAPE instances and then defines the distances in terms of the costs of optimal or greedy \LSAPE solutions. Given node sets $\NODES^G=\{u_1,\ldots,u_{|\NODES^G|}\}\subseteq V^G$ and $\NODES^H=\{v_1,\ldots,v_{|\NODES^H|}\}\subseteq V^H$, an \LSAPE instance $\C\in\mathbb{R}^{(|\NODES^G|+1)\times(|\NODES^H|+1)}$ is defined as 
$c_{i,k} \defined c_V(\ell^G_V(u_i),\ell^H_V(v_k))$, 
$c_{i,|\NODES^H|+1} \defined c_V(\ell^G_V(u_i),\epsilon)$, and
$c_{|\NODES^G|+1,k} \defined c_V(\epsilon,\ell^H_V(v_k))$, 
for all $(i,k)\in[|\NODES^G|]\times[|\NODES^H|]$. Then, a solution $\pi\in\Pi(\C)$ is computed\,---\,either optimally in $O(\min\{|\NODES^G|,|\NODES^H|\}^2\max\{|\NODES^G|,|\NODES^H|\})$ time or greedily in $O(|\NODES^G||\NODES^H|)$ time\,---\,and the node set distance $d^{G,H}_{\mathcal{P}(V)}$ between $\NODES^G$ and $\NODES^H$ is defined as
$d^{G,H}_{\mathcal{P}(V)}(\NODES^G,\NODES^H)\defined\C(\pi)$.
The edge set distance $d^{G,H}_{\mathcal{P}(E)}$ can be defined analogously.

\subsubsection{Multiset based node and edge set distances}\label{sec:heuristics:multiset}
  
Using \LSAPE to define $d^{G,H}_{\mathcal{P}(V)}$ and
$d^{G,H}_{\mathcal{P}(E)}$ yields fine-grained distance measures but
incurs a relatively high computation time. As an alternative, we suggest a faster,
multiset intersection based approach which computes a proxy for the
\LSAPE based distances. For this, the distance between node sets
$\NODES^G\subseteq V^G$ and $\NODES^H\subseteq V^H$ is defined as
\begin{IEEEeqnarray*}{rCl}
d^{G,H}_{\mathcal{P}(V)}&\defined&\delta_{|\NODES^G|>|\NODES^H|}\overline{c}_\mathit{del}(|\NODES^G|-|\NODES^H|)+\delta_{|\NODES^G|<|\NODES^H|}\overline{c}_\mathit{ins}(|\NODES^H|-|\NODES^G|)\\
&&+\>\overline{c}_\mathit{sub}(\min\{|\NODES^G|,|\NODES^H|\}-|\ell^G_V\llbracket
\NODES^G\rrbracket \cap\ell^H_V\llbracket \NODES^G\rrbracket |)\text{,}
\end{IEEEeqnarray*}
where $\overline{c}_\mathit{del}$, $\overline{c}_\mathit{ins}$, and
$\overline{c}_\mathit{sub}$ are the average costs of deleting a node
in $\NODES^G$, inserting a node in $\NODES^H$, and substituting a node
in $\NODES^G$ by a differently labeled node in $\NODES^H$, and
$\ell^G_V\llbracket \NODES^G\rrbracket $ and
$\ell^H_V\llbracket \NODES^H\rrbracket $ are the multiset images of
$\NODES^G$ and $\NODES^H$ under $\ell^G_V$ and
$\ell^H_V$. 

Since multiset intersections can be computed in
quasilinear time \cite{zeng:2009aa}, the dominant operation is the
computation of $\overline{c}_\mathit{sub}$ which requires
$O(|\NODES^G||\NODES^H|)$ time. Again, the edge set distance
$d^{G,H}_{\mathcal{P}(E)}$ can be defined analogously. The following
\Cref{lm:lsape-vs-multiset} relates the \LSAPE based definitions of
$d^{G,H}_{\mathcal{P}(V)}$ and $d^{G,H}_{\mathcal{P}(E)}$ to the ones
based on multiset intersection and justifies our claim that the latter
are proxies for the former.

\begin{proposition}\label{lm:lsape-vs-multiset}
Let $G,H\in\mathfrak{G}$, $\NODES^G\subseteq V^G$, $\NODES^H\subseteq V^H$, and assume that $c_V$ is quasimetric between $\NODES^G$ and $\NODES^G$, \ie, that $c_V(\ell^G_V(u),\ell^H_V(v))\leq c_V(\ell^G_V(u),\epsilon)+c_V(\epsilon,\ell^H_V(v))$ holds for all $(u,v)\in\NODES^G\times\NODES^H$. Then the definitions of $d^{G,H}_{\mathcal{P}(V)}(\NODES^G,\NODES^H)$ based on \LSAPE and multiset intersection incur the same number of node insertions, deletions, and substitutions. If, additionally, there are constants $c_\mathit{del},c_\mathit{ins},c_\mathit{sub}\in\mathbb{R}_{\geq0}$ such that the equations $c_V(\ell^G_V(u),\ell^H_V(v))=c_\mathit{sub}$, $c_V(\ell^G_V(u),\epsilon)=c_\mathit{del}$, and $c_V(\epsilon,\ell^H_V(v))=c_\mathit{ins}$ hold for all $(u,v)\in\NODES^G\times\NODES^H$, the two definitions coincide. For the edge set distances $d^{G,H}_{\mathcal{P}(E)}$, analogous statements hold.
\end{proposition}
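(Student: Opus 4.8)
The plan is to reduce both parts to two elementary facts about the optimal \LSAPE solution of the instance \C that underlies the \LSAPE based definition (we consider the optimal‑solver variant). Write $n_G\defined|\NODES^G|$, $n_H\defined|\NODES^H|$, and $m\defined|\ell^G_V\llbracket\NODES^G\rrbracket\cap\ell^H_V\llbracket\NODES^H\rrbracket|$. \textbf{Fact~1:} under the quasimetric hypothesis there is an optimal solution $\pi$ of \C that performs exactly $\min\{n_G,n_H\}$ substitutions, hence $\max\{n_G-n_H,0\}$ deletions and $\max\{n_H-n_G,0\}$ insertions, which are precisely the operation counts read off the multiset formula. \textbf{Fact~2:} the maximal number of zero‑cost substitutions attainable by such a $\pi$ equals $m$. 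Fact~1 already yields the general statement on operation counts, and Facts~1 and~2 together yield the coincidence of the cost \emph{values} in the constant‑cost case.

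For Fact~1 I would use an exchange argument: if a feasible solution both deletes some $u_i$ and inserts some $v_k$, replacing these two operations by the substitution $u_i\mapsto v_k$ gives a feasible solution whose cost changes by $c_V(\ell^G_V(u_i),\ell^H_V(v_k))-c_V(\ell^G_V(u_i),\epsilon)-c_V(\epsilon,\ell^H_V(v_k))\le 0$ by the quasimetric inequality. Iterating until no deletion/insertion pair remains produces an optimal solution with $\min\{n_G,n_H\}$ substitutions; since every feasible solution deletes $n_G$ minus its number of substitutions and inserts $n_H$ minus that number, the remaining counts are forced to the stated values, matching the multiset formula.

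For Fact~2 I would group the nodes of $\NODES^G$ and $\NODES^H$ by label: if $a_\sigma$ and $b_\sigma$ are the numbers of nodes labelled $\sigma$ in $\NODES^G$ and in $\NODES^H$, a set of substitutions between equally labelled nodes is exactly a matching in the disjoint union of the complete bipartite graphs $K_{a_\sigma,b_\sigma}$, so its maximum size is $\sum_\sigma\min\{a_\sigma,b_\sigma\}=m$. A maximum such matching extends to a full \LSAPE solution of the kind produced in Fact~1, and by maximality none of the remaining $\min\{n_G,n_H\}-m$ substitutions is between equally labelled nodes. In the constant‑cost case the averages $\overline{c}_\mathit{del},\overline{c}_\mathit{ins},\overline{c}_\mathit{sub}$ occurring in the multiset formula collapse to $c_\mathit{del},c_\mathit{ins},c_\mathit{sub}$, so this solution has cost $c_\mathit{del}\max\{n_G-n_H,0\}+c_\mathit{ins}\max\{n_H-n_G,0\}+c_\mathit{sub}(\min\{n_G,n_H\}-m)$, which is the multiset value; a short computation shows this is also a lower bound for every feasible solution (use $c_\mathit{del}+c_\mathit{ins}\ge c_\mathit{sub}$, that the number of substitutions is at most $\min\{n_G,n_H\}$, and that the number of zero‑cost substitutions is at most $m$), hence the solution is optimal and its cost equals the multiset value. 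The edge‑set statements follow verbatim with $E$ and $c_E$ in place of $V$ and $c_V$.

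The steps are routine; what needs care is a tie‑breaking subtlety rather than a hard argument. When the quasimetric inequality holds with equality, \C can admit optimal solutions that do perform a deletion/insertion pair and thus have different operation counts, so the statement should be read as \enquote{there is an optimal \LSAPE solution inducing the same operation counts} (equivalently, it holds for any solver breaking ties toward substitutions). One must also keep in mind that \enquote{number of substitutions} counts all substitutions, the zero‑cost ones included\,---\,the number of genuinely label‑changing substitutions need not coincide once the costs are non‑constant, as is already visible on two‑node examples\,---\,and that the degenerate case $c_\mathit{sub}=0$, forced whenever $\NODES^G$ and $\NODES^H$ share a label, is harmless since the substitution term then vanishes on both sides.
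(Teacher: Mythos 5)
Your proof is correct and follows essentially the same route as the paper's: both reduce the claim to the fact that, under the quasimetric hypothesis, an optimal \LSAPE solution performs $\min\{|\NODES^G|,|\NODES^H|\}$ substitutions and hence only $||\NODES^G|-|\NODES^H||$ deletions or insertions, and then identify the number of zero-cost substitutions with the multiset intersection size in the constant-cost case. The only differences are that the paper delegates that structural fact to a citation on quasimetric \LSAPE whereas you prove it directly via an exchange argument and an explicit lower-bound check, and that your remarks on tie-breaking (reading the claim as \enquote{there exists an optimal solution with these counts}) and on the degenerate case $c_\mathit{sub}=0$ make explicit two points the paper glosses over.
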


\begin{proof}
  Assume \mywlog that $|\NODES^G|\leq |\NODES^H|$, let \C be the
  \LSAPE instance for $\NODES^G$ and $\NODES^H$ constructed as shown in \Cref{sec:heuristics:ring:lsape}, and 
  $\pi$ be an optimal solution for \C. Since $c_V$
  is quasimetric, we know from \cite{bougleux:2020aa} that
  $\pi$ does not contain deletions and contains exactly
  $|\NODES^H|-|\NODES^G|$ insertions. This proves the first part of
  the proposition. If we additionally have constant edit costs between
  $\NODES^G$ and $\NODES^H$, $\C(\pi)$ is
  reduced to the cost of $|\NODES^H|-|\NODES^G|$ insertions plus
  $c_\mathit{sub}=\overline{c}_\mathit{sub}$ times the number of
  non-identical substitutions. This last quantity is provided by
  $|\NODES^G|-|l_V^G\llbracket \NODES^G\rrbracket \cap l_V^H\llbracket
  \NODES^G\rrbracket |$.
  We thus have
  $\C(\pi)=\overline{c}_\mathit{ins}(|\NODES^H|-|\NODES^G|)+\overline{c}_\mathit{sub}(|\NODES^G|-|l_V^G\llbracket
  \NODES^G\rrbracket \cap l_V^H\llbracket \NODES^G\rrbracket |)$,
  as required. The proof for $d^{G,H}_{\mathcal{P}(E)}$ is
  analogous.
\end{proof}

\begin{figure}[t]
\centering
\removelatexerror
\input{img/params}
\caption{Algorithm for learning the parameters $L$, $\boldsymbol{\alpha}$, and $\boldsymbol{\lambda}$.}\label{fig:params}
\end{figure}

\subsubsection{Choice of parameters and runtime complexity}\label{sec:heuristics:parameters}

In \Cref{fig:params}, an algorithm is described that, given a set of training graphs $\mathcal{G}$
and node and edge set distances $d^{G,H}_{\mathcal{P}(V)}$ and
$d^{G,H}_{\mathcal{P}(E)}$, learns good values for $L$,
$\boldsymbol{\alpha}$, and $\boldsymbol{\lambda}$. First,
$L$ is set to an upper bound for the ring sizes and all rings of size $L$ rooted at the
nodes of the graphs $G\in\mathcal{G}$ are constructed (\cf \Cref{fig:bfs}). Next, $L$ is lowered to
$1$ plus the largest $l<L$ such that there is a graph
$G\in\mathcal{G}$ and a node $u\in V^G$ with
$\RINGLS(G,u)_l\neq(\emptyset,\emptyset,\emptyset)$. By
\Cref{rem:rings}, this $l$ equals the maximal diameter of the graphs
contained in $\mathcal{G}$. Let $\RING_{L,\boldsymbol{\alpha},\boldsymbol{\lambda}}(G,H)$ bet the upper bound for $\GED(G,H)$ returned by \RING if called with parameters $L$, $\boldsymbol{\alpha}$ and $\boldsymbol{\lambda}$, and $\mu\in[0,1]$ be a tuning parameter that should be small if one wants to optimize for tightness and large if one wants to optimize for runtime. Then, a blackbox optimizer
\cite{rios:2013aa} is called to minimize the objective 
$f_{L,\mu}(\boldsymbol{\alpha},\boldsymbol{\lambda})\defined
\left[\mu + (1-\mu)\cdot\frac{|\supp(\boldsymbol{\lambda})|-1}{\max\{1,L-1\}}\right]\cdot\sum_{(G,H)\in\mathcal{G}^2}\RING_{\boldsymbol{\alpha},\boldsymbol{\lambda},L}(G,H)$
over all simplex vectors $\boldsymbol{\alpha}\in\boldsymbol{\Delta}^{2}$ and $\boldsymbol{\lambda}\in\boldsymbol{\Delta}^{L-1}$. We include $|\supp(\boldsymbol{\lambda})|-1$ in the objective, because only levels which are contained in the support of $\boldsymbol{\lambda}$ (\ie, all levels $l$ with $\lambda_l>0$) contribute to $d_{\mathfrak{R}_L}$. Hence, only few layer distances have to be computed if $|\supp(\boldsymbol{\lambda})|$ is small. Once optimized parameters $\boldsymbol{\alpha}$ and $\boldsymbol{\lambda}$ have been computed, $L$ can be further lowered to $L=1+\max{\supp(\boldsymbol{\lambda})}$.

\begin{remark}[Runtime Complexity of \RING]\label{prop:ring:runtime}
Let $G,H\in\mathfrak{G}$ be graphs, $L\in\N_{\geq0}$ be a constant, and $\Omega=\mathcal{O}(\min\{\max\{|E^G|,|E^H|\},\max\{\Delta(G),\Delta(H)\}^L\})$ be the size of the largest node or edge set contained in the rings of $G$ and $H$, where $\Delta(G)$ is $G$'s maximum degree. Upon constructing the rings, \RING requires $\mathcal{O}(\Omega^3|V^G||V^H|)$ time to populate its \LSAPE instance, if \LSAPE based node and edge set distances are used, and $\mathcal{O}(\Omega^2|V^G||V^H|)$ time, if multiset based distances are employed.
\end{remark}


\subsection{\RINGML: a machine learning based instantiation}\label{sec:heuristics:ringml}

If \LSAPEGED is instantiated with the help of machine learning
techniques, feature vectors associated to the node assignments have to
be defined. The heuristic \RINGML uses rings of size $L$ to accomplish
this task. Formally, \RINGML defines a function
$\mathcal{F}:\mathfrak{A}\to\mathbb{R}^{6L+10}$ that maps node
assignments to feature vectors with six features per layer and ten
global features. Let $(G,H,u,v)\in\mathfrak{A}$ be a node assignment
and $\RINGLS(G,u)$ and $\RINGLS(H,v)$ be the rings rooted at $u$ in
$G$ and at $v$ in $H$, respectively. For each level
$l\in\{0,\ldots,L{-}1\}$, a feature vector
$\mathbf{x}^l\in\mathbb{R}^6$ is constructed by comparing the layers
$\RINGLS(G,u)_l=(\NODES^G_l(u),\OUTER^G_l(u),\INNER^G_l(u))$ and
$\RINGLS(H,v)_l=(\NODES^H_l(v),\OUTER^H_l(v),\INNER^H_l(v))$ at level
$l$:
$\mathbf{x}^l_0 \defined |\NODES^G_l(u)|-|\NODES^H_l(v)|$,
$\mathbf{x}^l_1 \defined |\OUTER^G_l(u)|-|\OUTER^H_l(v)|$,
$\mathbf{x}^l_2 \defined |\INNER^G_l(u)|-|\INNER^H_l(v)|$,
$\mathbf{x}^l_3 \defined d^{G,H}_{\mathcal{P}(V)}(\NODES^G_l(u),\NODES^H_l(v))$,
$\mathbf{x}^l_4 \defined d^{G,H}_{\mathcal{P}(E)}(\OUTER^G_l(u),\OUTER^H_l(v))$, 
$\mathbf{x}^l_5 \defined d^{G,H}_{\mathcal{P}(E)}(\INNER^G_l(u),\INNER^H_l(v))$.

The first three features compare the layers' topologies. The last
three features use node and edge set distances
$d^{G,H}_{\mathcal{P}(V)}$ and $d^{G,H}_{\mathcal{P}(E)}$ to express the similarity of the involved node
and edge labels. \RINGML also constructs a vector
$\mathbf{x}^{G,H}\in\mathbb{R}^{10}$ of ten global features: the number of nodes and edges of $G$ and $H$,
the average costs for deleting nodes and edges from $G$, the average
costs for inserting nodes and edges into $H$, and the average costs
for substituting nodes and edges in $G$ by nodes and edges in $H$. The
complete feature vector $\mathcal{F}(G,H,u,v)$ is then defined as the
concatenation of the global features $\mathbf{x}^{G,H}$ and the layer
features $\mathbf{x}^l$.

\begin{remark}[Runtime Complexity of \RINGML]\label{prop:ring-ml:runtime}
Let $G,H\in\mathfrak{G}$ be graphs, $L\in\N_{\geq0}$ be a constant, and $\Omega$ be the size of the largest node or edge set contained in one of the rings of $G$ and $H$. Then, once all rings have been constructed, \RINGML requires $\mathcal{O}((\Omega^3+f^{\mathtt{ML}}(1))|V^G||V^H|)$ time to populate its \LSAPE instance \C if \LSAPE based node and edge set distances are used, and $\mathcal{O}((\Omega^2+f^{p^\star}(1))|V^G||V^H|)$ time if multiset intersection based distances are employed. $\mathcal{O}(f^{p^\star}(n))$ is the complexity of evaluating the probability estimate $p^\star$ of the chosen machine learning technique on feature vectors of size $n$.
\end{remark}


\subsection{\NGM and \PREDICT in the context of \LSAPEGED}\label{sec:heuristics:ngm-predict}

To conclude this section, we summarize the existing machine learning based heuristics
 \NGM and \PREDICT, and discuss them in the context of the paradigm \LSAPEGED. 
 
\NGM assumes the node labels to be real-valued vectors and 
defines the feature vectors $\mathcal{F}(G,H,u,v)\defined(\ell^G_V(u),\deg^G(u),\ell^H_V(v),\deg^H(v))$. Hence, no feature
vectors for node deletions and insertions can be constructed. Next, \NGM trains a \DNN to obtain probability estimates for a 
node assignment being good as described in \Cref{sec:paradigm:ml}, and uses these estimates to populate an \LSAP (not \LSAPE) instance. This instance is solved 
to obtain an upper bound for $\GED(G,H)$ in the case where $|V^G|=|V^H|$. \NGM cannot be generalized to the case $|V^G|\neq|V^H|$,
because the last row and column of the \LSAPE instance cannot be populated.

Unlike \NGM, \PREDICT defines feature vectors that cover
node deletions and insertions and are defined for general node and
edge labels. \PREDICT first calls \BP to construct an \LSAPE instance \C. 
Next, $\mathcal{F}(G,H,u,v)$ is defined as the concatenation of four global statistics of \C, the node and the edge costs encoded in the cell $c_{i,k}$ 
that corresponds to the node assignment $(G,H,u,v)$, and ten statistics of the $i$\textsuperscript{th} row and the $k$\textsuperscript{th} column of \C. \PREDICT then trains a 
kernelized \SVC without probability estimates to learn a decision function, which is used to predict if a node assignment is $\epsilon$-optimal. \PREDICT can hence easily be extended to fully instantiate the paradigm \LSAPEGED: 
we only have to replace the \SVC by a \DNN or a \SVM as detailed in \Cref{sec:paradigm:ml}.

\section{Experimental Evaluation}\label{sec:exp}

We carried out extensive experiments to empirically evaluate the newly proposed algorithms. In \Cref{sec:exp:setup}, we describe the experimental setup; in \Cref{sec:exp:results}, we report the results.

\subsection{Setup of the experiments}\label{sec:exp:setup}

\subsubsection{Compared methods}\label{sec:exp:methods}

We tested three variants of \RING: \RINGOPT uses optimal \LSAPE
for defining the set distances $d^{G,H}_{\mathcal{P}(V)}$ and
$d^{G,H}_{\mathcal{P}(E)}$, \RINGGD uses greedy \LSAPE, and \RINGMS
uses the multiset intersection based approach. \RINGML was
tested with three different machine learning techniques: \SVC{s} with
RBF kernel and probability estimates \cite{riesen:2016aa}, fully
connected feedforward \DNN{s} \cite{cortes:2018aa}, and \SVM{s} with
RBF kernel. We compared to \LSAPE based competitors that can cope with non-uniform edit costs: 
\BP, \BRANCH, \BRANCHFAST, \SUBGRAPH, \WALKS, and \PREDICT. As \WALKS assumes
constant edit costs, we slightly
extended it by averaging the costs before each run.  To handle \SUBGRAPH's exponential complexity, we set a time limit of \SI{1}{\milli\second}
for computing a cell $c_{i,k}$ of its \LSAPE instance \C. \PREDICT was tested with the same probability estimates as \RINGML. Since some of
our test graphs have symbolic labels and not all of them are of the
same size, we did non include \NGM. For all methods, we
varied the number of threads and \LSAPE
solutions over $\{1,4,7,10\}$ and parallelized the
construction of \C. Moreover, we included the local search based algorithm \IPFP in the experiments, which is one of the most accurate available \GED heuristics but is much slower than instantiations of \LSAPEGED, as detailed in \Cref{sec:intro:related}.

\subsubsection{Benchmark datasets}\label{sec:exp:datasets}

We tested on the benchmark datasets \pah, \alkane, \letter, and \aids
\cite{riesen:2008aa}, which are widely used in the research community \cite{riesen:2009aa,serratosa:2015aa,carletti:2015aa,riesen:2015aa,bougleux:2017aa,daller:2018aa,zheng:2015aa,blumenthal:2020ac,blumenthal:2018aa,blumenthal:2018ac}. \Cref{tab:datasets} summarizes some of their properties. \letter contains graphs that model highly
distorted letter drawings, while the graphs in \pah, \alkane, and \aids
represent chemical compounds. For \letter, we used the edit costs
suggested in Ref.~\refcite{riesen:2010aa}, for \pah, \alkane, and \aids the edit costs
defined in Ref.~\refcite{abu-aisheh:2017aa}. The graphs contained in \pah and \alkane have unlabeled nodes, \ie, \pah and \alkane contain graphs whose information is exclusively encoded in the topologies. \letter and \aids graphs have a higher node informativeness.

\begin{table}[t]
\centering
\caption{Properties of benchmark datasets.}\label{tab:datasets}
\small
\begin{tabular}{@{}lS[table-format = 2.1]S[table-format = 2.0]S[table-format = 2.1]S[table-format = 3.0]S[table-format = 2.0]S[table-format = 2.0]@{}}
\toprule
dataset & {avg. $|V^G|$} & {max. $|V^G|$} & {avg. $|E^G|$} & {max. $|E^G|$} & {$|\Sigma_V|$} & {classes} \\
\midrule
\aids & 15.7 & 95 & 16.2 & 103 & 19 & 2\\
\letter & 4.7 & 9 & 4.5 & 9 & {$\infty$} & 15 \\
\pah & 20.7 & 28 & 24.4 & 34 & 1 & 2 \\
\alkane & 8.9 & 10 & 7.9 & 9 & 1 & 1 \\ 
\bottomrule
\end{tabular}
\end{table}

\subsubsection{Synthetic datasets}

We also tested on synthetic datasets to evaluate the effect of the node informativeness in a controlled setting. For this, we generated datasets \smol{$|\Sigma_V|$}, for $|\Sigma_V|\in\{1,4,7,10\}$, and datasets \sacyclic{$|\Sigma_V|$}, for $|\Sigma_V|\in\{3,5,7,9\}$. The \smol{$|\Sigma_V|$} datasets contain synthetic molecules similar to the ones contained in \alkane. We generated the molecules as pairwise non-isomorphic trees whose sizes were randomly drawn from $\{8,9,10,11,12\}$. Next, we generated four variants of each molecule\,---\,one for each of the four datasets \smol{1}, \smol{4}, \smol{7}, and \smol{10}\,---\,by randomly drawing node labels from $\Sigma_V=[1]$, $\Sigma_V=[4]$, $\Sigma_V=[7]$, and $\Sigma_V=[10]$, respectively. Edges are unlabeled in all variants. The \sacyclic{$|\Sigma_V|$} datasets were generated similarly. More precisely, \sacyclic{3} contains the real-world molecular graphs from the dataset \acyclic, a widely used benchmark dataset with $|\Sigma_V|=3$ and $|\Sigma_E|=2$. The datasets \sacyclic{5}, \sacyclic{7}, and \sacyclic{9} contain variants of the molecules, where the node labels were randomly drawn from $\Sigma_V=[5]$, $\Sigma_V=[7]$, and $\Sigma_V=[9]$. The synthetic datasets are hence constructed such that node label informativeness increases with increasing $|\Sigma_V|$. Since rings are designed for graphs where most information resides in the topologies, we  expect the tightness gain of ring based heuristics \wrt \BP, \BRANCH, and \BRANCHFAST to drop with increasing $|\Sigma_V|$.

\subsubsection{Meta-parameters and training}\label{sec:exp:param}

For learning the meta-parameters of \RINGOPT, \RINGGD, \RINGMS,
\SUBGRAPH, and \WALKS, and training the \DNN{s}, the \SVC{s}, and the
\SVM{s}, we picked a training set $\mathcal{S}_1\subset\mathcal{D}$
with $|\mathcal{S}_1|=50$ for each dataset
$\mathcal{D}$. Following Ref.~\refcite{carletti:2015aa,gauzere:2014aa}, we
picked the parameter $L$ of \SUBGRAPH and \WALKS by
minimizing the mean upper bound on $\mathcal{S}_1$ over
$L\in\{1,2,3,4,5\}$. For choosing the meta-parameters of the \RING variants, we set the tuning parameter $\mu$ to $1$ and initialized our
blackbox optimizer with 100 random simplex vectors
$\boldsymbol{\alpha}$ and $\boldsymbol{\lambda}$. 
For determining the network structure of
the fully connected feedforward \DNN{s}, we carried out $5$-fold cross
validation, varying the number of hidden layers, the number of neurons
per hidden layers, and the activation function at hidden layers over
the grid
$\{1,\ldots,10\}\times\{1,\ldots,20\}\times\{\text{ReLU},\text{Sigmoid}\}$. Similarly,
we determined the meta-parameters $C$ and $\gamma$ of the \SVC via
$5$-fold cross-validation over
$\{10^{-3},\ldots,10^{3}\}\times\{10^{-3},\ldots,10^{3}\}$. For the
\SVM, we set $\gamma=1/\dim(\mathcal{F})$,
where $\dim(\mathcal{F})$ is the number of features. \IPFP was used to compute $\epsilon$-optimal node maps, with an empirically determined $\epsilon\approx0.0423$ \cite{blumenthal:2020aa}. For balancing
the training data $\mathcal{T}$, we
randomly picked only $|\pi|$ node assignments $(u,v)\notin\pi$ for
each close to optimal node map $\pi$. \IPFP's meta-parameters were set to $\kappa=40$, $L=3$, and $\rho=1/4$ (\cf \cite{boria:2020aa} for explanations of $\kappa$, $L$, and $\rho$).

\subsubsection{Protocol, test metrics, and implementation}\label{sec:exp:protocol}

For each dataset $\mathcal{D}$, we randomly selected a test set $\mathcal{S}_2\subseteq\mathcal{D}\setminus\mathcal{S}_1$ with $|\mathcal{S}_2|=\min\{100,|\mathcal{D}\setminus\mathcal{S}_1|\}$, and ran each method on each pair $(G,H)\in\mathcal{S}_2\times\mathcal{S}_2$ with $G\neq H$. We recorded the average runtime in seconds ($t$), the average value of the returned upper bound for \GED ($b$), and the ratio of graphs which are correctly classified if the returned upper bound is employed in combination with a $1$-NN classifier ($r$). For the experiments on the synthetic datasets, we also recorded the deviation in percent from the upper bound computed by the accurate but slow local search algorithm \IPFP ($d$). Having access to a heuristic \ALG that yields tight upper bounds is important, because the tighter the upper bound, the higher the recall if \ALG is used to approximately answer queries of the form \enquote{given a query graph $H$, a graph collection $\mathcal{G}$, and a threshold $\tau$, find all graphs $G\in\mathcal{G}$ with $\GED(G,H)\leq\tau$}.

All methods were implemented in C++ \cite{blumenthal:2019aa}. We employed the \LSAPE solver \cite{bougleux:2020aa}, used NOMAD \cite{le-digabel:2011aa} as our blackbox optimizer, LIBSVM \cite{chang:2011aa} for implementing \SVC{s} and \SVM{s}, and FANN \cite{nissen:2003aa} for implementing \DNN{s}. Tests were run on a machine with two Intel Xeon E5-2667 processors with 8 cores and 98 GB of main memory. Sources and datasets are available at \url{https://github.com/dbblumenthal/gedlib/}.

\subsection{Results of the experiments}\label{sec:exp:results}

\subsubsection{Effect of machine learning techniques}\label{sec:exp:results:ml}

\begin{table*}[t]
\centering
\caption{Effect of machine learning techniques on \RINGML and \PREDICT.}\label{tab:prob-est}
\small
\begin{tabular}{@{}lS[table-format=2.2]S[table-format=2.2]S[table-format=1.2e-1]S[table-format=1.2e-1]S[table-format=1.2]S[table-format=1.2]@{}}
\toprule
 & {\RINGML{}$^\star$} & {\PREDICT \cite{riesen:2016aa}} & {\RINGML{}$^\star$} & {\PREDICT \cite{riesen:2016aa}} & {\RINGML{}$^\star$} & {\PREDICT \cite{riesen:2016aa}} \\
\midrule
& \multicolumn{6}{c}{\letter}\\
& \multicolumn{2}{c}{avg.\@ upper bound $b$} & \multicolumn{2}{c}{avg.\@ runtime $t$ in sec.} & \multicolumn{2}{c}{avg.\@ classif. ratio $r$} \\
\cmidrule(lr){2-3}\cmidrule(lr){4-5}\cmidrule(l){6-7}
\DNN \cite{cortes:2018aa} & 8.24 & 8.19 & 2.99e-4 & 1.48e-4 & 0.20 & 0.22\\
\SVC \cite{riesen:2016aa} & 6.07 & 6.07 & 6.47e-3 & 2.82e-3 & 0.73  & 0.76\\
\SVM{}$^\star$ & 5.68 & 5.22 & 2.58e-3 & 2.07e-3 & 0.81 & 0.81\\
\midrule
& \multicolumn{6}{c}{\pah}\\
& \multicolumn{2}{c}{avg.\@ upper bound $b$} & \multicolumn{2}{c}{avg.\@ runtime $t$ in sec.} & \multicolumn{2}{c}{avg.\@ classif. ratio $r$} \\
\cmidrule(lr){2-3}\cmidrule(lr){4-5}\cmidrule(l){6-7}
\DNN \cite{cortes:2018aa} & 25.29 & 44.03 & 5.69e-3 & 1.23e-3 & 0.64 & 0.56\\
\SVC \cite{riesen:2016aa} & 31.91 & 36.68 & 7.19e-1 & 3.40e-1 & 0.61 & 0.65\\
\SVM{}$^\star$& 24.55 & 24.55 & 2.12e-1 & 1.14e-1 & 0.71 & 0.71\\
\bottomrule
\end{tabular}

\end{table*}

\Cref{tab:prob-est} shows the performances of different machine learning techniques when used in combination with the feature vectors defined by \RINGML and \PREDICT on the datasets \letter and \pah (with number of threads and maximal number of \LSAPE solutions set to 10). Starred techniques are presented in this paper. The results for \alkane and \aids are similar and are omitted because of space constraints. The tightest upper bounds and best classification ratios were achieved by \SVM{s}. Using \DNN{s} improved the runtime, but resulted in dramatically worse classification ratios and upper bounds. Using \SVC{s} instead of \SVM{s} negatively affected all three test metrics. In the following, we therefore only report the results for \SVM{s} and \DNN{s}. 

\subsubsection{Effect of number of threads and \LSAPE solutions}\label{sec:exp:results:sol-threads}

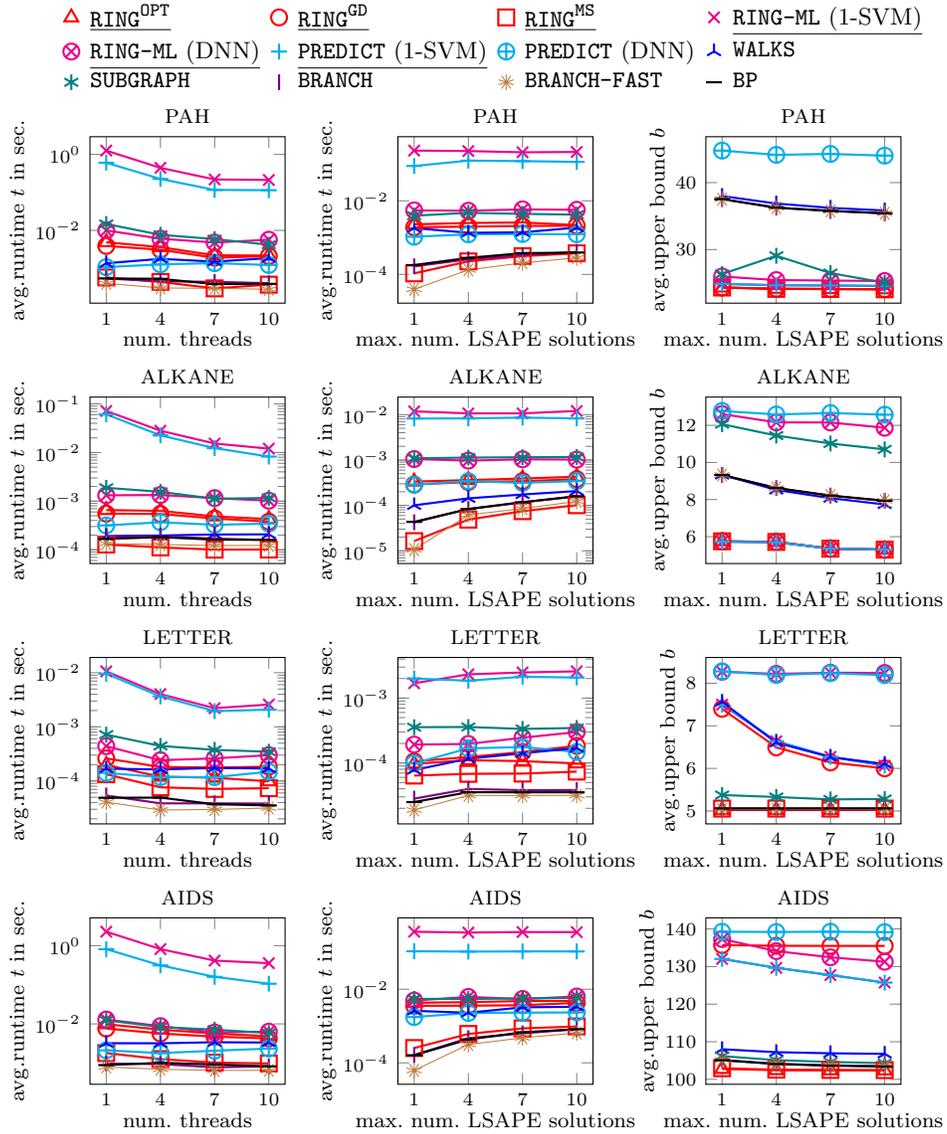
\begin{figure*}[!t]
\centering
\begin{tikzpicture}
\begin{groupplot}
[
group style={group name=lineplots, group size=3 by 4, horizontal sep=1.5cm, vertical sep=1.25cm},
width=.33\linewidth,
height=.3\linewidth,
legend columns=4,
legend cell align=left,
legend style={align=left, draw=none, column sep=.5ex, font=\small}
]
\addloglineplots{\pah}{pah__RUNTIMES_vs_THREADS_S10.csv}{num_threads}{avg_runtime}{num.\ threads}{avg.\@ runtime $t$ in sec.}
\addloglineplots{\pah}{pah__RUNTIMES_UB_vs_SOLUTIONS_T10.csv}{num_solutions}{avg_runtime}{max.\ num.\ \LSAPE solutions}{avg.\@ runtime $t$ in sec.}
\addlineplots{\pah}{pah__RUNTIMES_UB_vs_SOLUTIONS_T10.csv}{num_solutions}{avg_ub}{max.\ num.\ \LSAPE solutions}{avg.\@ upper bound $b$}
\addloglineplots{\alkane}{alkane__RUNTIMES_vs_THREADS_S10.csv}{num_threads}{avg_runtime}{num.\ threads}{avg.\@ runtime $t$ in sec.}
\addloglineplots{\alkane}{alkane__RUNTIMES_UB_vs_SOLUTIONS_T10.csv}{num_solutions}{avg_runtime}{max.\ num.\ \LSAPE solutions}{avg.\@ runtime $t$ in sec.}
\addlineplots{\alkane}{alkane__RUNTIMES_UB_vs_SOLUTIONS_T10.csv}{num_solutions}{avg_ub}{max.\ num.\ \LSAPE solutions}{avg.\@ upper bound $b$}
\addloglineplots{\letter}{Letter_HIGH__RUNTIMES_vs_THREADS_S10.csv}{num_threads}{avg_runtime}{num.\ threads}{avg.\@ runtime $t$ in sec.}
\addloglineplots{\letter}{Letter_HIGH__RUNTIMES_UB_vs_SOLUTIONS_T10.csv}{num_solutions}{avg_runtime}{max.\ num.\ \LSAPE solutions}{avg.\@ runtime $t$ in sec.}
\addlineplots{\letter}{Letter_HIGH__RUNTIMES_UB_vs_SOLUTIONS_T10.csv}{num_solutions}{avg_ub}{max.\ num.\ \LSAPE solutions}{avg.\@ upper bound $b$}
\addloglineplots{\aids}{AIDS__RUNTIMES_vs_THREADS_S10.csv}{num_threads}{avg_runtime}{num.\ threads}{avg.\@ runtime $t$ in sec.}
\addloglineplots{\aids}{AIDS__RUNTIMES_UB_vs_SOLUTIONS_T10.csv}{num_solutions}{avg_runtime}{max.\ num.\ \LSAPE solutions}{avg.\@ runtime $t$ in sec.}
\addlineplots{\aids}{AIDS__RUNTIMES_UB_vs_SOLUTIONS_T10.csv}{num_solutions}{avg_ub}{max.\ num.\ \LSAPE solutions}{avg.\@ upper bound $b$}
\end{groupplot}
\node at ($(lineplots c1r1.north) !.5! (lineplots c3r1.north)$) [inner sep=0pt,anchor=south, yshift=3ex] {\pgfplotslegendfromname{grouplegend}};
\end{tikzpicture}
\caption{Number of threads vs.\@ runtimes (first row, maximal number of \LSAPE solutions fixed to 10) and maximal number of \LSAPE solutions vs.\@ runtimes and upper bounds (second and third row, number of threads fixed to 10). Underlined methods use techniques proposed in this paper.}\label{fig:lineplots}
\end{figure*}

\Cref{fig:lineplots} shows the effects of varying the number of threads and \LSAPE solutions.  Unsurprisingly, slower methods benefited more from
parallelization than faster ones. The only exception is \WALKS, whose local structure
distances require a lot of unparallelizable pre-computing. Computing several \LSAPE solution tightened
the upper bounds of mainly those methods that yielded loose upper
bounds if run with only one solution. The outlier of \SUBGRAPH on
\letter is due to the fact that \SUBGRAPH was run with a time
limit on the computation of the subgraph distances; and that the
employed subproblem solver behaves deterministically only if run to optimality. Increasing the number of \LSAPE solutions significantly
increased the runtimes of only the fastest algorithms. Computationally expensive \LSAPEGED instantiations spend most of the runtime on constructing the \LSAPE instances. For these methods, the additional time required to enumerate the \LSAPE solutions is negligible.

\subsubsection{Overall performance on benchmark datasets}\label{sec:exp:results:benchmark}

\begin{figure*}[t]
\centering
\begin{tikzpicture}
\begin{groupplot}
[
group style={group name=paretoplots, group size=2 by 4, horizontal sep=1.5cm, vertical sep=1.25cm},
width=.5\linewidth,
height=.3\linewidth,
legend columns=4,
legend cell align=left,
legend style={align=left, draw=none, column sep=.5ex, font=\small}
]
\addlogparetoplots{\pah}{pah__RESULTS_T10_S10.csv}{avg_ub}{avg_runtime}{avg.\@ upper bound $b$}{avg.\@ runtime $t$ in sec.}
\addlogparetoplots{\pah}{pah__RESULTS_T10_S10.csv}{avg_classification_ratio}{avg_runtime}{avg.\@ classification ratio $r$}{avg.\@ runtime $t$ in sec.}
\addlogparetoplots{\alkane}{alkane__RESULTS_T10_S10.csv}{avg_ub}{avg_runtime}{avg.\@ upper bound $b$}{avg.\@ runtime $t$ in sec.}
\addlogparetoplots{\alkane}{alkane__RESULTS_T10_S10.csv}{avg_classification_ratio}{avg_runtime}{avg.\@ classification ratio $r$}{avg.\@ runtime $t$ in sec.}
\addlogparetoplots{\letter}{Letter_HIGH__RESULTS_T10_S10.csv}{avg_ub}{avg_runtime}{avg.\@ upper bound $b$}{avg.\@ runtime $t$ in sec.}
\addlogparetoplots{\letter}{Letter_HIGH__RESULTS_T10_S10.csv}{avg_classification_ratio}{avg_runtime}{avg.\@ classification ratio $r$}{avg.\@ runtime $t$ in sec.}
\addlogparetoplots{\aids}{AIDS__RESULTS_T10_S10.csv}{avg_ub}{avg_runtime}{avg.\@ upper bound $b$}{avg.\@ runtime $t$ in sec.}
\addlogparetoplots{\aids}{AIDS__RESULTS_T10_S10.csv}{avg_classification_ratio}{avg_runtime}{avg.\@ classification ratio $r$}{avg.\@ runtime $t$ in sec.}
\end{groupplot}
\node at ($(paretoplots c1r1.north) !.5! (paretoplots c2r1.north)$) [inner sep=0pt,anchor=south, yshift=3ex] {\pgfplotslegendfromname{grouplegend}};
\end{tikzpicture}
\caption{Runtimes vs.\@ upper bounds and classification ratios with number of threads and maximal number of \LSAPE solutions fixed to $10$. Underlined methods use techniques proposed in this paper. As \alkane contains only one graph class, on this dataset, we have $r(\ALG)=1$ for all algorithms \ALG.}\label{fig:paretoplots}
\end{figure*}
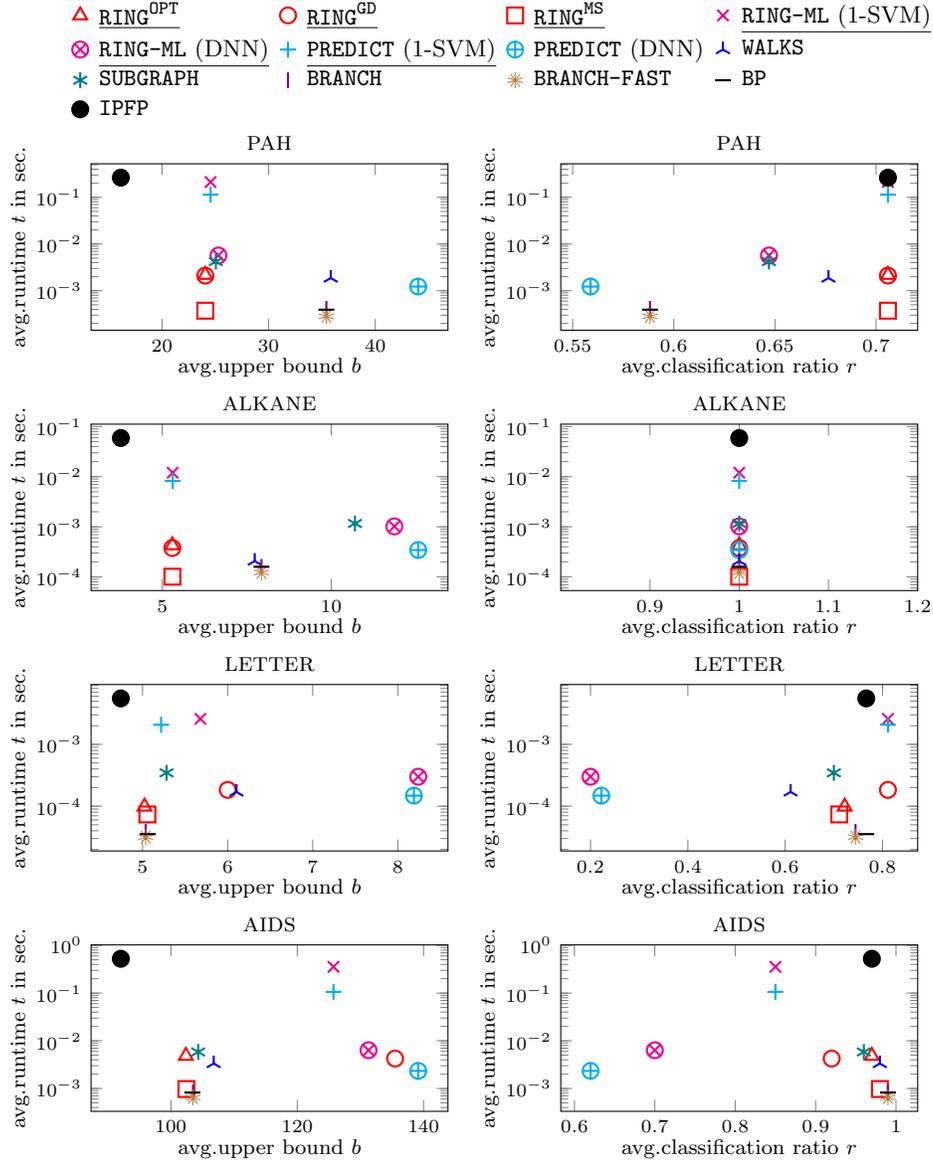

\Cref{fig:paretoplots} summarizes the overall performances of the compared methods on the four benchmark datasets with the number of threads and maximal number of \LSAPE solutions fixed to 10. We see that, across all datasets, \RINGOPT yielded the tightest upper bound among all instantiations of \LSAPEGED. \RINGMS, \ie, the variant of \RING which uses the multiset intersection based approach for computing the layer distances, performed excellently, too, as it was significantly faster than \RINGOPT and yielded only slightly looser upper bounds. The variant \RINGGD performed worse than \RINGOPT and \RINGMS. As expected, the local search algorithm \IPFP computed the tightest upper bounds but was around two orders of magnitude slower than the variants of \RING. 

While \pah and \alkane contain unlabeled graphs whose information is entirely encoded in the topology, \letter and \aids graphs have highly discriminative node labels (\cf column $|\Sigma_V|$ in \Cref{tab:datasets}). This leads to different results for \pah and \alkane, on the one hand, and \letter and \aids, on the other hand. On \letter and \aids, also the baseline approaches \BP, \BRANCH, and \BRANCHFAST yield upper bounds which are only slightly looser than \IPFP's close to optimal bounds. Consequently, the tightness gains of the \RING variants are marginal. For \pah and \alkane, the outcomes are very different. Here, the baseline approaches computed much looser upper bounds than \IPFP and were clearly outperformed by the \RING variants. Using rings hence indeed significantly improves \LSAPE based heuristics on instances which are difficult to solve because the graphs' topologies are more important than the node labels.

If run with \SVM{s} with RBF kernels, the machine learning based methods \PREDICT and \RINGML performed very similarly in terms of classification ratio and tightness of the produced upper bounds. Both yielded very promising classification ratios on \pah and \letter, but were not competitive \wrt runtime, because, at runtime, $\lVert\mathbf{x}^i-\mathcal{F}(G,H,u,v)\rVert^2_2$ has to be computed for each node assignment $(G,H,u,v)$ and each training vector $\mathbf{x}^i$ (\cf Section~\ref{sec:paradigm:ml:technique}). Running \RINGML and \PREDICT with \DNN{s} instead of \SVM{s} dramatically improved the runtimes but led to looser upper bounds and worse classification ratios. If run with \DNN{s}, \RINGML produced tighter upper bounds than \PREDICT. Yet, globally, the machine learning based methods were outperformed by classical \LSAPEGED instantiations.

\subsubsection{Results for synthetic datasets}\label{sec:exp:results:s-mol}

\Cref{fig:smolplots} shows the results for the synthetic datasets. The curves visualize by how much the upper bounds computed by \LSAPE based heuristics deviate from \IPFP's close to optimal upper bounds. Recall that small $|\Sigma_V|$ means that most information resides in the graphs' topologies. To improve the readability, we only show the curves for \RINGOPT, \RINGMS, and the methods \BP, \BRANCH, and \BRANCHFAST that employ narrow local structures. All other methods yielded looser upper bounds or were much slower. 

\RINGOPT and \RINGMS performed very similarly and always yielded the best upper bounds.
As expected, the tightness gap between \RINGOPT and \RINGMS, on the one side, and \BP, \BRANCH, and \BRANCHFAST, on the other side, was much higher on the datasets \smol{1} and \sacyclic{3} than on the datasets \smol{$|\Sigma_V|$} and \sacyclic{$|\Sigma_V|$} for $|\Sigma_V|>3$. The tests on synthetic graphs hence confirms our explanation for the observed results on the benchmark datasets: Using ring based heuristics is especially beneficial if most information is encoded in the graphs' topologies rather than in the node labels. 


\begin{figure}[t]
\centering
\begin{tikzpicture}
\begin{groupplot}
[
group style={group name=lineplots, group size=2 by 1, horizontal sep=1.5cm, vertical sep=1.25cm},
width=.5\linewidth,
height=.33\linewidth,
legend columns=5,
legend cell align=left,
legend style={align=left, draw=none, column sep=.5ex, font=\small}
]
\addlineplotsnomlsmol{\smolnoarg}{S-MOL-dev.csv}{num_labels}{deviation}{label alphabet size $|\Sigma_V|$}{avg.\@ deviation $d$ in \si{\percent}}
\addlineplotsnomlsmao{\sacyclicnoarg}{S-ACYCLIC-dev.csv}{num_labels}{deviation}{label alphabet size $|\Sigma_V|$}{avg.\@ deviation $d$ in \si{\percent}}
\end{groupplot}
\node at ($(lineplots c1r1.north) !.5! (lineplots c2r1.north)$) [inner sep=0pt,anchor=south, yshift=3ex] {\pgfplotslegendfromname{grouplegend}};
\end{tikzpicture}
\caption{Effect of node informativeness with number of threads and maximal number of \LSAPE solutions fixed to $10$. Underlined methods use techniques proposed in this paper. Methods whose curves are not displayed yielded higher deviations or were much slower.}\label{fig:smolplots}
\end{figure}
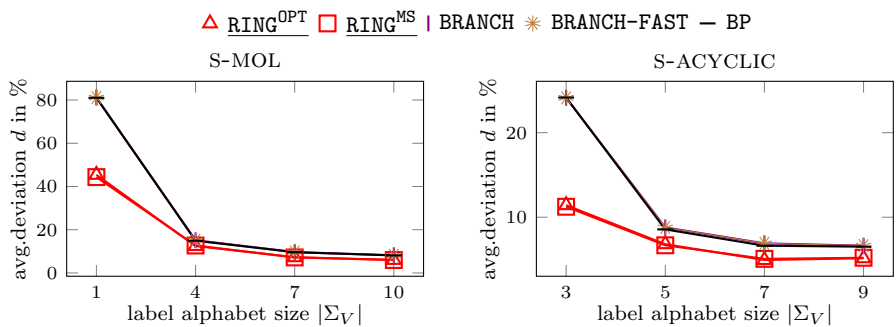

\subsubsection{Upshot of the results}\label{sec:exp:results:upshot}

The experiments yield
four take-home messages: Firstly, \RINGOPT and \RINGMS are the methods of choice if one wants to compute tight upper bounds for \GED on graphs where most information is encoded in the topology but cannot afford to run much slower local search based heuristics such as \IPFP. Secondly, one should always compute several
\LSAPE solutions. This only slightly
increases the runtime and at the same time significantly improves the
upper bounds of methods that yield loose upper bounds if run with only
one \LSAPE solution. Thirdly, machine learning based \LSAPEGED
instantiations such as \RINGML and \PREDICT should be run with \SVM{s}
as suggested in this paper if one wants to optimize for classification
ratio and tightness of the produced upper bound, and with \DNN{s} as
suggested in Ref.~\refcite{cortes:2018aa} if one wants to optimize for runtime. Fourthly, \RINGML and \PREDICT show promising potential but
cannot yet compete with classical instantiations of \LSAPEGED. If run
with \SVM{s}, they are competitive in terms of classification
ratio and tightness of the produced upper bound but not in terms of
runtime; if run with \DNN{s}, the opposite is the case. The open
challenge for future work is therefore to develop new machine learning
frameworks that exploit the information encoded in \RINGML's and
\PREDICT's feature vectors such that the resulting \GED heuristics are
competitive both \wrt quality and \wrt runtime behaviour.

\section{Conclusions and Future Work}\label{sec:conc}

In this paper, we formalized the paradigm \LSAPEGED for upper bounding \GED via transformations to \LSAPE and showed how to use machine learning in general in \SVM{s} in particular for this purpose. Moreover, we introduced rings, a new kind of local structures to be used by instantiations of \LSAPEGED, and presented the algorithms \RING and \RINGML that use rings to instantiate \LSAPEGED in a classical way (\RING) or via machine learning (\RINGML). 

Extensive experiments showed that, while existing instantiations of \LSAPEGED struggle with datasets where the information is mainly encoded in the graphs' topologies, \RING yields tight upper bounds also on such difficult instances. \RING hence closes the gap between existing instantiations of \LSAPE and accurate but very slow local search algorithms such as \IPFP. In other words, \RING is the first available \GED heuristics which allows to quickly compute reasonably tight upper bounds for \GED on difficult instances. In future work, we will seek to boost the performance of \RINGML by using automated feature selection techniques for weighing the importance of the employed local and global features.

\bibliographystyle{ws-ijprai}
\bibliography{root}

\vspace*{-0.01in}
\noindent
\rule{12.6cm}{.1mm}


\biophoto{blumenthal.jpg}{{\bf David B.\ Blumenthal} is a postdoctoral fellow at the Chair of Experimental Bioinformatics of the Technical University of Munich, Germany. He received the PhD in computer science from the Free  University of Bozen-Bolzano, Italy. His main interests are graph-based data management and algorithmic techniques for network and systems medicine.}

\vglue-2.00truein
\hspace*{2.45truein}
\biophoto{gamper.jpg}{{\bf Johann Gamper} is a full professor at the Faculty of Computer Science of the Free University of Bozen-Bolzano, Italy. He received the PhD in Computer Science from the RWTH Aachen, Germany. His main interests are database technologies for processing temporal and spatial data, data warehousing, and approximate query answering.}

\biophoto{seb.jpg}{{\bf Sébastien Bougleux} is an associate professor at UNICAEN, ENSICAEN, CNRS, GREYC, Normandie Université, Caen, France. He received the PhD in computer science from the Université de Caen Normandie, France. His main interests are image processing and analysis as well as graph matching and graph similarity search.}

\vglue-2.00truein
\hspace*{2.45truein}
\biophoto{brun.jpg}{{\bf Luc Brun} is  a  full  professor  at  ENSICAEN, CNRS, GREYC, Normandie  Université, Caen, France. He received the PhD in computer science from the Université Bordeaux I, France.  His  main  research  interests  are  color image segmentation, combinatorial maps, pyramidal data structures, and metrics on graphs.}

\end{document}